\documentclass [journal,onecolumn,11pt]{IEEEtran}
\usepackage{amsfonts,amsmath,amssymb}
\usepackage{indentfirst, setspace}
\usepackage{url,float}
\usepackage[american]{babel}
\usepackage[T1]{fontenc}
\usepackage[utf8]{inputenc}
\usepackage{color}
\usepackage[a4paper,ignoreall]{geometry}
\geometry{left=2cm, right=2cm, top=4cm, bottom=4cm}
\usepackage{longtable}
\setcounter{tocdepth}{3}
\usepackage{graphicx}
\usepackage[a4paper,ignoreall]{geometry}
\usepackage{multicol}
\usepackage{stfloats}
\usepackage{enumerate}
\usepackage{amsmath}
\usepackage{amsfonts}
\usepackage{amssymb}
\usepackage{algorithm}
\usepackage[noend]{algpseudocode}
\usepackage{cite}
\usepackage{amsthm}
\usepackage{booktabs}
\usepackage{mathrsfs}
\usepackage{multirow}
\usepackage{microtype}
\usepackage{amssymb}
\usepackage{verbatim}
\usepackage{accsupp}
\usepackage{listings}
\usepackage{cases}
\usepackage{xcolor}
\usepackage[overload]{empheq}
\def\qu#1 {\fbox {\footnote {\ }}\ \footnotetext { From Qu: {\color{red}#1}}}
\def\hqu#1 {}
\def\kq#1 {\fbox {\footnote {\ }}\ \footnotetext { From KangQuan: {\color{blue}#1}}}
\def\hkq#1 {}


\newtheorem{Th}{Theorem}
\newtheorem{Cor}{Corollary}
\newtheorem{Prop}{Proposition}
\newtheorem{Prob}{Problem}

\newtheorem{Def}{Definition}
\newtheorem{example}{Example}

\newtheorem{Rem}{Remark}

\newcommand{\tr}{{\rm Tr}}

\newcommand{\AC}{\ensuremath{\mathsf{AC}}}

\newcommand{\gf}{{\mathbb F}}

\newcommand{\Z}{\mathbb {Z}}

\newcommand{\supp}{{\rm Supp}}

\usepackage{todonotes}

\newcommand{\mkq}[1]{{{\color{blue}#1}}}

\makeatletter
\newcommand{\figcaption}{\def\@captype{figure}\caption}
\newcommand{\tabcaption}{\def\@captype{table}\caption}
\makeatother


\begin{document}
	\title{
		On the Differential-Linear Connectivity Table \\ of Vectorial Boolean Functions}
	\author{{Anne Canteaut, Lukas Kölsch, Chao Li, Chunlei Li, \\ Kangquan Li,  Longjiang Qu and Friedrich Wiemer}
	\thanks{\noindent Anne Canteaut is with Inria, Paris, France. Lukas Kölsch is with  University of Rostock, Germany.
		Chunlei Li is with the Department of Informatics, University of Bergen, Bergen N-5020, Norway.
		Kangquan Li, Longjiang Qu and Chao Li are with the College of Liberal Arts and Sciences,
		National University of Defense Technology, Changsha, 410073, China.
		Longjiang Qu is also with the State Key Laboratory of Cryptology, Beijing, 100878, China. Friedrich Wiemer is with the  Horst G{\"{o}}rtz Institute for IT Security, Ruhr University Bochum, Germany.
		\textbf{Emails}: anne.canteaut@inria.fr, lukas.koelsch@uni-rostock.de,   lichao\_nudt@sina.com, 	 chunlei.li@uib.no,
		 likangquan11@nudt.edu.cn, 
		 ljqu\_happy@hotmail.com, friedrich.wiemer@rub.de
	}
}

	\maketitle
	
\begin{abstract}

Vectorial Boolean functions are crucial building-blocks in symmetric ciphers. 
Different known attacks on block ciphers have resulted in diverse cryptographic criteria for vectorial Boolean functions,
such as differential uniformity and nonlinearity.  
Very recently, Bar-On et al. introduced at Eurocrypt'19 a new tool, called 
the differential-linear connectivity table (DLCT), which allows for taking
into account the dependency between the two subciphers $E_0$ and $E_1$ involved in differential-linear attacks. This new notion leads to significant improvements of differential-linear attacks on several ciphers.
This paper presents a theoretical characterization of the DLCT of vectorial
Boolean functions and also investigates this new criterion for some families of functions with specific forms.

More precisely, we firstly reveal the connection between the DLCT and the autocorrelation of vectorial Boolean functions,
we characterize properties of the DLCT by means of the Walsh transform of the function and of its differential distribution table, and we present generic bounds on the highest magnitude occurring in the DLCT of 
vectorial Boolean functions, which coincides (up to a factor~\(2\)) with the well-established notion of absolute indicator.  
Next, we investigate the invariance property of the DLCT of vectorial Boolean functions under the affine, 
extended-affine, and Carlet-Charpin-Zinoviev (CCZ) equivalence and exhaust the DLCT spectra of optimal $4$-bit S-boxes under affine equivalence. 
Furthermore, we study the DLCT of APN, plateaued and AB functions and establish its connection with other cryptographic criteria.
 Finally, we investigate the DLCT and the absolute indicator of some specific polynomials with optimal or low differential uniformity, including monomials, cubic functions, quadratic functions and inverses of quadratic permutations.

\end{abstract}	


\section{Introduction}
{
  Let $n, m$ be two arbitrary positive integers. We denote by $\gf_{2^n}$ the finite field with $2^n$ elements and by $\gf_2^n$ the $n$-dimensional vector space over $\gf_2$. 
  Vectorial Boolean functions from $\gf_2^n$ to $\gf_{2}^m$, also  called  $(n,m)$-functions, play a crucial role in block ciphers.  
Many attacks have been proposed against 
block ciphers, and have led to diverse criteria, such as low differential uniformity, high nonlinearity, high algebraic degree, etc, that the implemented cryptographic functions must satisfy.  
In Eurocrypt'18, Cid et al. \cite{EC:CHPSS18} introduced a new concept on S-boxes: the boomerang connectivity table (BCT) that similarly
analyzes the dependency between the upper part and lower part of a block cipher in a boomerang attack.
The work of \cite{EC:CHPSS18} quickly attracted attention in the study of BCT property of cryptographic functions \cite{ToSC:BouCan18,LQSL2019,MTX2019,ToSC:SonQinHu19}
and stimulated research progress in other cryptanalysis methods.
  Very recently, in Eurocrypt'19, Bar-On et al. \cite{EC:BDKW19} introduced  a new tool called the differential-linear connectivity table (DLCT) that
analyzes the dependency between the two subciphers in differential-linear attacks, thereby improving the efficiency of the attacks introduced in~\cite{C:LanHel94}.  
The authors of \cite{EC:BDKW19} also presented the relation between the DLCT and the differential distribution table (DDT) of S-boxes. 

This paper aims to provide a theoretical characterization of the main properties of the DLCT, explicitly of the set formed by all its entries and of the highest magnitude in this set, for generic vectorial Boolean functions. 
To this end, we firstly show that the DLCT coincides (up to a factor~\(2\)) with the autocorrelation of vectorial Boolean functions, 
	which is extended from  Boolean functions.
Based on the study of the autocorrelation  of vectorial Boolean functions, we give some characterizations of the DLCT by means of the Walsh transform and the DDT, and provide a lower bound on 
the absolute indicator (i.e., equivalently, on the highest absolute value in the DLCT excluding the first row and first column) of any $(n,m)$-function; then we 
exhibit an interesting divisibility property of the autocorrelation of $(n,m)$-functions $F$, which implies that the entries of DLCT of any $(n, n)$-permutations are divisible by $4$.  
 Next, we investigate the invariance property of the autocorrelation (and the DLCT) of vectorial Boolean functions under affine, extended-affine (EA) and Carlet-Charpin-Zinoviev (CCZ) equivalence, and 
show that the autocorrelation spectrum is affine-invariant and its maximum magnitude is EA-invariant but not CCZ-invariant.
 Based on the classification of optimal $4$-bit S-boxes by Leander and Poschmann \cite{LP2007}, 
 we explicitly calculate their autocorrelation spectra (see Table \ref{Tab-4Autocorrelation}).  
 Moreover, for certain functions like APN, plateaued and AB functions, we present the relation of  their autocorrelation (and DLCT)  with other cryptographic criteria. We show that the autocorrelation of APN and AB/plateaued functions can be converted to the Walsh transform of two classes of balanced Boolean functions. 
 Finally, we investigate the autocorrelation spectra of some special polynomials with optimal or low differential uniformity, including monomials, cubic functions, quadratic functions and inverses of quadratic permutations.  

 The rest of this paper is organized as follows. Section \ref{Preliminaries} recalls basic definitions, particularly the new notion of DLCT, the generalized notion of autocorrelation, and the connection between them. Most notably, we show that the highest magnitude in the DLCT coincides (up to a factor~\(2)\) with the absolute indicator of the function. Section \ref{Characterizations} is devoted to the characterization of the autocorrelation: we firstly characterize the autocorrelation by means of the Walsh transform and of the DDT of the function. We then exhibit generic lower bounds on the
 absolute indicator of any vectorial Boolean function and study the divisibility of the autocorrelation coefficients.
Besides, we study the invariance of the absolute indicator and of the autocorrelation spectrum under the affine, EA and CCZ equivalences. We also present all possible autocorrelation spectra  of optimal $4$-bit S-boxes. At the end of this section, we study some properties of the autocorrelation of APN, plateaued and AB functions. 
In Section \ref{polynomials}, we consider  the autocorrelation of some special polynomials. Finally, Section \ref{conclusion} draws some conclusions of our work. }
 
\section{Preliminaries}
\label{Preliminaries}

{In this section, we firstly recall some basics on (vectorial) Boolean functions and known results that are useful for our subsequent discussions. 
Since the vector space $\gf_2^n$ can be deemed as the finite field  $\gf_{2^n}$ for a fixed choice of basis, 
we will use the notation $\gf_2^n$ and $\gf_{2^n}$ interchangeably when there is no ambiguity.
We will also use the inner product $a\cdot b$ and $\tr_{2^n}(ab)$ in the context of vector spaces and finite fields  interchangeably. 
For any set $E$, we denote the nonzero elements of $E$ by $E^{*}$ (or $E\setminus\{0\}$) and the cardinality of $E$ by $\#E$.  
}

\subsection{Walsh transform, Bent functions, AB functions and Plateaued functions}
An $n$-variable Boolean function is a mapping from $\gf_{2}^n$ to $\gf_2$.
  For any $n$-variable Boolean function $f$, its \textit{Walsh transform} of $f$ is defined as 
$$ W_f(\omega) = \sum_{x\in\gf_{2}^n}(-1)^{f(x)+\omega\cdot x}, $$
where $``\cdot"$ is an inner product on $\gf_2^n$. 
{ The Walsh transform of $f$ can be seen as the \textit{discrete Fourier transform} of the function $(-1)^{f(x)}$ and yields the well-known Parseval's relation \cite{Carlet2010}} :
$$\sum_{\omega\in\gf_{2}^n}W_f^2(\omega)=2^{2n}.$$
The \textit{linearity} of $f$ is defined by $$\mathtt{L}(f) =  \max_{\omega\in\gf_{2}^n}|W_f(\omega)|$$ and \textit{nonlinearity} of $f$ is defined by
$$\mathtt{NL}(f) = 2^{n-1} - \frac{1}{2} \mathtt{L}(f), $$
where $| r|$ denotes the absolute value of any real value $r$.  
According to the Parseval's relation, it is easily seen that the nonlinearity of an $n$-variable Boolean function is upper bounded by $2^{n-1}-2^{n/2-1}$. 
 Boolean functions achieving the maximum nonlinearity are called \emph{bent} functions and exist only for even $n$; their Walsh transforms take only two values $ \pm 2^{n/2}$ \cite{Rothaus1976}.

For an $(n,m)$-function $F$ from $\gf_2^n$ to $\gf_2^m$, its \textit{component} corresponding to a nonzero $v\in \gf_2^m$ is the Boolean function given by 
$$ f_{v} (x) = v\cdot F(x).$$ For any $u\in\gf_{2}^n$ and nonzero $v\in\gf_{2}^m$, the Walsh transform of $F$ is defined by those of its components $f_v$, i.e.,
$$W_F(u,v) = \sum_{x\in\gf_{2}^n}(-1)^{u\cdot x + v\cdot F(x)}.$$ 
The linear approximation table (LAT) of an $(n,m)$-function $F$  is the $2^n\times 2^m$ table, in which the entry at position~$(u,v)$ is: $$\mathtt{LAT}_F(u,v) = W_F(u,v),$$ where  $u\in\gf_{2}^n$ and $v\in\gf_{2}^m$. 
The maximum absolute entry of the LAT, ignoring the $0$-th column, is the linearity of $F$ denoted as $\mathtt{L}(F)$, i.e., 
$$\mathtt{L}(F) = \max_{u\in\gf_{2}^n, v\in\gf_{2}^m\backslash\{0\}} |W_F(u,v)|$$ 
Similarly, the nonlinearity of $F$ is defined by the nonlinearities of the components, namely,
$$\mathtt{NL}(F) =  2^{n-1} - \frac{1}{2} \mathtt{L}(F).$$

An $(n,m)$-function $F$ is called \textit{vectorial bent}, or shortly \textit{bent} if 
all its components $F_v(x) = v\cdot F(x)$ for each nonzero $v\in \gf_2^m$ are bent. It is well-known $(n,m)$-bent functions exist only if $n$ is even and $m\le \frac{n}{2}$. 
Interested readers can refer to \cite{Sihem2016,Tokareva2015} for more results on bent functions.
For $(n,m)$-functions $F$ with $m\ge n-1$, the Sidelnikov-Chabaud-Vaudenay bound
$$\mathtt{NL}(F)\le 2^{n-1}-\frac{1}{2} \left( \frac{3\cdot 2^n - 2(2^n-1)(2^{n-1}-1)}{2^m-1} -2  \right)^{1/2}$$
gives a better upper bound for nonlinearity than the universal bound \cite{EC:ChaVau94}. 
When $n=m$ and $n$ is odd, the inequality becomes
$$\mathtt{NL}(F)\le 2^{n-1}-2^{\frac{n-1}{2}},$$
and it is achieved 
by the \emph{almost bent} (AB) functions.
 It is well-known that an $(n,n)$-function $F$ is AB if and only if its Walsh transform takes only three values $0, \pm 2^{\frac{n+1}{2}}$ \cite{EC:ChaVau94}.

A Boolean functions is called \emph{plateaued} if its Walsh transform takes at most three values: $0$ and $\pm \mu$ (where $\mu$, a positive integer, is called the \emph{amplitude} of the plateaued function). It is clear that bent and almost bent functions are plateaued. Because of Parseval's relation, the amplitude $\mu$ of any plateaued function must be of the form $2^r$ for certain integer $r\ge n/2$. An $(n,m)$-function is called \emph{plateaued} if all its components are plateaued, with possibly different amplitudes. In particular, an $(n,m)$-function $F$ is called \emph{plateaued with single amplitude} if all its components are plateaued with the same amplitude. 
It is clear that AB functions form a subclass of plateaued functions with the single amplitude $2^{\frac{n+1}{2}}$. 

\subsection{Differential uniformity and APN functions}
For an $(n,m)$-function $F$ and any $u\in\gf_2^n\backslash\{0\}$, the function 
$$D_uF(x) = F(x)+F(x+u)$$
is called the derivative of $F$ in direction $u$. 
The \textit{differential distribution table} (DDT) of $F$ is the $2^n\times 2^m$ table, in which the entry at position~$(u,v)$ is
$$ \mathtt{DDT}_F(u,v) =\# \{ x\in\gf_2^n ~|~ D_uF(x) = v \}, $$
where  $u\in\gf_{2}^n$ and $v\in\gf_{2}^m$. The \textit{differential uniformity}~\cite{EC:Nyberg93} of $F$ is defined as 
$$\delta_F=\max_{u\in\gf_{2}^n\backslash\{0\}, v\in\gf_{2}^m}\mathtt{DDT}_F(u,v).$$

Since $D_uF(x)=D_uF(x+u)$ for any $x, u$ in $\gf_2^n$, the entries of DDT are always even and the minimum of differential uniformity of $F$ is $2$. The functions with differential uniformity $2$ are called \emph{almost perfect nonlinear} (APN) functions.

{  
\subsection{The DLCT and the autocorrelation table} \label{Subsec-AAC}

Very recently, Bar-On et al. in \cite{EC:BDKW19} presented the concept of the differential-linear connectivity table (DLCT) of $(n,m)$-functions $F$. 

\begin{Def}
	\cite{EC:BDKW19}\label{DLCT}
	Let $F$ be an $(n,m)$-function. The DLCT of $F$ is the $2^n\times 2^m$ table whose rows correspond to input differences to $F$ and whose columns correspond to output masks of $F$, defined as follows: for $u\in\gf_2^n$ and $v\in\gf_2^m$, the DLCT entry at $(u, v)$ is defined by
	$$\mathtt{DLCT}_F(u, v)  = \#  \{ x \in\gf_2^n | v\cdot F(x) = v \cdot F(x+u)  \}   -2^{n-1}.  $$
\end{Def}

Since for any $u\in\gf_{2}^n\backslash\{0\}$, $D_uF(x) = D_uF(x+u)$, $\mathtt{DLCT}_F(u,v)$ must be even. Furthermore, for a given $u\in\gf_{2}^n\backslash\{0\}$, if  $D_uF$ is a $2\ell$-to-$1$ mapping for a positive integer $\ell$, then $\mathtt{DLCT}_F(u,v)$ is a multiple of $2\ell$. Moreover, it is trivial that for any $(u, v)\in\gf_2^n\times \gf_{2}^m$, $\left|  \mathtt{DLCT}_F(u, v)\right| \le 2^{n-1} $, and $\mathtt{DLCT}_F(u,v) = 2^{n-1}$ when either $u=0$ or $v=0$. Therefore, 
we only need to focus on the cases for $u\in\gf_{2}^n\backslash\{0\}$ and $v\in\gf_{2}^m\backslash\{0\}$.



Our first observation on the DLCT is that it coincides with the {\em autocorrelation table} (ACT) of $F$~\cite[Section~3]{ZZI2000}. 
Below we recall the definition of the autocorrelation of Boolean functions, see e.g.~\cite[P.~277]{Carlet2010}, and extend it to vectorial Boolean functions.

\begin{Def} \cite{ZZ1995} Given a Boolean function $f$ on $\gf_{2}^n$, the \textit{autocorrelation} of the function $f$ at $u$ is defined as 
	$$\AC_f(u) = \sum_{x\in\gf_{2}^n}(-1)^{f(x)+f(x+u)}.$$ 
	Furthermore, the
	\textit{absolute indicator} of $f$ is defined as $\Delta_f = \max_{u\in \gf_{2}^n \setminus \{0\}}|\AC_f(u)|$.
\end{Def}

Similarly to Walsh coefficients, this notion can naturally be generalized to vectorial Boolean functions as follows.
\begin{Def}
	\label{AC_Def}
	Let $F$ be an $(n,m)$-function. For any  $u\in\gf_{2}^n$ and $v\in\gf_{2}^m$,  the \textit{autocorrelation} of $F$ at $(u,v)$ is defined as
	$$\AC_F(u,v) = \sum_{x\in\gf_{2}^n}(-1)^{v\cdot (F(x) + F(x+u))},$$
	and the \textit{autocorrelation spectrum} of $F$ is given by the multiset
	$$ \Lambda_F = \Big\{ \AC_F(u,v): u\in\gf_{2}^n\backslash \{ 0 \}, v\in\gf_{2}^m \backslash \{ 0 \}   \Big\}. $$ 
	Moreover, the \textit{absolute indicator} of $F$ is defined as $$\Delta_F =\max_{u\in\gf_{2}^n\backslash\{0\}, v\in\gf_{2}^m\backslash\{0\}}|\AC_F(u,v)|.$$ 
\end{Def}
In \cite{ZZI2000}, the term Autocorrelation Table (ACT) for a vectorial Boolean function was introduced. Similarly to the LAT, it contains the autocorrelation spectra of the components of~\(F\):
\[\mathsf{ACT}_F(u,v) = \AC_F(u,v)\;.\]
It is also worth noticing that
\begin{equation}
	 	\AC_F(u,v) = W_{D_uF}(0, v).
\end{equation}

From Definitions \ref{DLCT} and \ref{AC_Def}, we immediately have the following connection between the DLCT and the autocorrelation of vectorial Boolean functions.		
		\begin{Prop}
			\label{DLCT_Delta}
			Let $F$ be an $(n,m)$-function. Then for any $u\in\gf_{2}^n$ and $v\in\gf_{2}^m$, the autocorrelation of $F$ at $(u,v)$ is twice the value of the DLCT of $F$ at the same position $(u,v)$, i.e., 
			\begin{equation*}
			\mathtt{DLCT}_F(u,v) = \frac{1}{2}\AC_F(u,v)\;.
			\end{equation*}
                        Moreover
                        \[\max_{u\in\gf_{2}^n\backslash\{0\}, v\in\gf_{2}^m\backslash\{0\}} |\mathtt{DLCT}_F(u,v)| = \frac{1}{2}\Delta_F\;.\]
	\end{Prop}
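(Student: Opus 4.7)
The plan is to prove both equalities by a direct count-and-sign argument that connects the $\pm 1$-valued exponential sum defining $\AC_F(u,v)$ to the set cardinality defining $\mathtt{DLCT}_F(u,v)$. First I would fix $u \in \gf_2^n$ and $v \in \gf_2^m$ and partition $\gf_2^n$ according to the value of the bit $v\cdot(F(x)+F(x+u))$. Writing
\[
N_0 = \#\{x\in\gf_2^n : v\cdot F(x) = v\cdot F(x+u)\}, \qquad N_1 = 2^n - N_0,
\]
so that $N_0$ is exactly the quantity appearing in Definition~\ref{DLCT} before subtracting $2^{n-1}$, the sum in Definition~\ref{AC_Def} splits as $\AC_F(u,v) = N_0 - N_1$.

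Next I would combine $N_0 + N_1 = 2^n$ with $\AC_F(u,v) = N_0 - N_1$ to obtain $\AC_F(u,v) = 2N_0 - 2^n = 2(N_0 - 2^{n-1})$. Since $N_0 - 2^{n-1} = \mathtt{DLCT}_F(u,v)$ by Definition~\ref{DLCT}, the first identity $\mathtt{DLCT}_F(u,v) = \tfrac{1}{2}\AC_F(u,v)$ falls out immediately. This holds for all pairs $(u,v) \in \gf_2^n \times \gf_2^m$, not just the nonzero ones, and the proof uses nothing beyond the definitions.

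For the second claim, I would simply take absolute values on both sides of the pointwise identity and then the maximum over $u\in\gf_2^n\setminus\{0\}$ and $v\in\gf_2^m\setminus\{0\}$. Because the scaling factor $\tfrac{1}{2}$ is positive, it commutes with the maximum, giving $\max_{u,v\ne 0} |\mathtt{DLCT}_F(u,v)| = \tfrac{1}{2}\max_{u,v\ne 0}|\AC_F(u,v)| = \tfrac{1}{2}\Delta_F$ by Definition~\ref{AC_Def}.

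There is no real obstacle here: the statement is essentially a bookkeeping translation between a counting formulation (DLCT) and a character-sum formulation (autocorrelation). The only care needed is to record that the derivation holds pointwise for every $(u,v)$, so that the restriction to nonzero $u$ and $v$ for the absolute indicator is a matter of choice of domain rather than a nontrivial step.
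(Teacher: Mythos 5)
Your proof is correct and follows essentially the same route as the paper's: the paper also partitions $\gf_2^n$ into the sets where $v\cdot(F(x)+F(x+u))$ equals $0$ or $1$ (its $M_0,M_1$ are your $N_0,N_1$) and derives $\AC_F(u,v)=\#M_0-\#M_1=2\,\mathtt{DLCT}_F(u,v)$. The second claim is likewise an immediate consequence of the pointwise identity, as you note.
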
 

\begin{proof}
	Denote $M_i = \{ x\in\gf_{2}^n |  v\cdot \left( F(x)  + F(x+u) \right) =i   \}$.
	From the definitions of DLCT and autocorrelation it follows that
				\begin{eqnarray*}
		2\cdot\mathtt{DLCT}_F(u, v) &=& 2 \cdot \# \{ x \in\gf_2^n | v\cdot F(x) = v \cdot F(x+u)  \}   - 2^{n} \\
		&	=& \# M_0 -( 2^n - \#M_0)  \\ 
		& =& \#M_0-\#M_1	
\\		&= & \sum_{x\in\gf_{2}^n}(-1)^{v\cdot (F(x) + F(x+u))} 
\\&=& \AC_F(u,v).
	\end{eqnarray*}	
This gives the desired conclusion.
\end{proof} 	
For the remainder of this paper we thus stick to the established notion of the autocorrelation table instead of the DLCT, and we will study the absolute indicator of the function since it determines the highest magnitude in the DLCT.	 
\begin{Rem}
	Let us recall some relevant results on the autocorrelation table.	The entries $\AC_{F}(u,v), v \neq 0$ in each nonzero 	
	row in the ACT of an $(n,n)$-function $F$ sum to zero if and only if $F$ is a permutation (see e.g. \cite[Proposition 2]{BCC2006}).  The same property holds when the entries $\AC_{F}(u,v)$,  $u\neq0$  in each nonzero column in the ACT are considered (see e.g. \cite[Eq. (9)]{BCC2006}).
\end{Rem}
}

\section{Some characterizations and properties of the autocorrelation table}
\label{Characterizations}

{In this section, we give some characterizations and properties of the DLCT of vectorial Boolean functions from the viewpoint of the autocorrelation introduced in Subsection \ref{Subsec-AAC}. }

\subsection{Links between the autocorrelation and the Walsh transform}
In this subsection, we express the autocorrelation (or equivalently the DLCT) by the Walsh transform of the function. The following proposition shows that the restriction of the autocorrelation function $u \mapsto \AC_{F} (u,v) $ can be seen as the discrete Fourier transform of the squared Walsh transform of \(F_v\): $\omega\mapsto W_F(\omega,v)^2$.

\begin{Prop}
	\label{DLCT_WT}
	Let $F$ be an $(n,m)$-function.
        Then for any $u\in\gf_{2}^n$ and $v\in\gf_{2}^m$, 
 \[W_F(u,v)^2 = \sum_{\omega \in\gf_2^n}(-1)^{\omega\cdot u}\AC_F(u,v).\]
         Conversely, the inverse Fourier transform leads to
  	 \begin{equation}
  \label{DW}
  \AC_F(u,v) = \frac{1}{2^{n}}\sum_{\omega\in\gf_2^n}(-1)^{u\cdot \omega} W_F(\omega,v)^2
  \end{equation} 
Moreover, we have 
$$\sum_{u\in\gf_{2}^n} 	\AC_F(u,v) = W_F(0,v)^2 $$
and
\begin{equation}
\label{D2W4}
\sum_{u\in\gf_{2}^n}\AC_F(u,v)^2 = \frac{1}{2^{n}}\sum_{\omega\in\gf_{2}^n}W_F(\omega,v)^4. 
\end{equation} 
	 \end{Prop}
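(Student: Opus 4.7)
The plan is to establish the four identities in order, since each one either follows from the previous by a routine transformation (Fourier inversion, specialization, Parseval) or is obtained directly from the definitions. The key technical ingredient throughout is the orthogonality relation $\sum_{u \in \gf_2^n} (-1)^{u\cdot w} = 2^n$ if $w = 0$ and $0$ otherwise.

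First, to obtain the Fourier-type identity $W_F(u,v)^2 = \sum_{\omega\in\gf_2^n}(-1)^{\omega\cdot u}\AC_F(\omega,v)$ (note that the statement as written contains an obvious typo in the second argument of $\AC_F$), I would expand $W_F(u,v)^2$ as a double sum $\sum_{x,y} (-1)^{v\cdot F(x)+v\cdot F(y)+u\cdot(x+y)}$ and perform the substitution $y = x+\omega$. The inner sum over $x$ then collapses into $\AC_F(\omega,v)$ by Definition~\ref{AC_Def}, and the outer sum over $\omega$ produces exactly the desired expression.

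Second, to obtain the inverse relation \eqref{DW}, I would apply the Walsh–Hadamard inversion formula to the identity just proved. Concretely, I would multiply both sides by $(-1)^{u\cdot u'}$, sum over $u\in\gf_2^n$, and use orthogonality to single out the term $\omega = u'$; this yields $\AC_F(u',v) = 2^{-n}\sum_{u}(-1)^{u\cdot u'}W_F(u,v)^2$, which, after renaming, is exactly \eqref{DW}.

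The third identity $\sum_{u}\AC_F(u,v) = W_F(0,v)^2$ is just the first identity specialized at $u = 0$, where the character $(-1)^{\omega\cdot 0}$ becomes~$1$. Finally, for the Parseval-type identity \eqref{D2W4}, I would apply the standard Parseval relation to the function $u\mapsto \AC_F(u,v)$ viewed as a real-valued function on $\gf_2^n$: by the first identity its Walsh transform is $u\mapsto W_F(u,v)^2$, so $\sum_{u}\AC_F(u,v)^2 = 2^{-n}\sum_{\omega}W_F(\omega,v)^4$. I do not anticipate a genuine obstacle here — all four claims are essentially bookkeeping around the orthogonality of Walsh characters — so the only care needed is to keep the normalization factor $2^{-n}$ and the indexing of the variables straight throughout.
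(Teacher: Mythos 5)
Your proposal is correct and follows essentially the same route as the paper: expand $W_F(u,v)^2$ as a double sum, substitute $y=x+\omega$ to recognize $\AC_F(\omega,v)$, invert by orthogonality, and apply Parseval to the function $u\mapsto\AC_F(u,v)$; you also rightly note the typo in the stated first identity, where the right-hand side should read $\AC_F(\omega,v)$. The only cosmetic difference is that you obtain the third identity by setting $u=0$ in the first one, whereas the paper sums the inversion formula over $u$ — these are trivially equivalent.
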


\begin{proof}
According to the definition, for any $u\in\gf_{2}^n$,
	\begin{eqnarray*}
		W_F(u,v)^2 &=& \sum_{x\in\gf_{2}^n}(-1)^{u\cdot x + v\cdot F(x)}\sum_{y\in\gf_{2}^n}(-1)^{u\cdot y + v\cdot F(y)} \\
		&=&\sum_{x,y\in\gf_2^n}(-1)^{u\cdot(x+y)+ v\cdot (F(x) + F(y))} \\
		&=&\sum_{x,\omega\in\gf_2^n}(-1)^{u\cdot \omega + v\cdot (F(x) + F(x+\omega))} \\
		&=&\sum_{\omega\in\gf_2^n}(-1)^{u\cdot \omega}\sum_{x\in\gf_2^n}(-1)^{v\cdot (F(x) + F(x+\omega))}\\
		&=&\sum_{\omega\in\gf_2^n}(-1)^{u\cdot \omega}\AC_F(\omega,v).
	\end{eqnarray*}
	
	The inverse Fourier Transform then leads to
	$$\AC_F(u,v) = \frac{1}{2^n}\sum_{\omega\in\gf_2^n}(-1)^{\omega\cdot u} W_F(\omega,v)^2. $$
	Moreover, we have 
	\begin{eqnarray*}
	\sum_{u\in\gf_2^n}\AC_F(u,v) 
		&=& \frac{1}{2^{n}} \sum_{\omega\in\gf_{2}^n} W_F(\omega,v)^2\sum_{u\in\gf_2^n}(-1)^{\omega\cdot u}\\
		&=& W_F(0,v)^2.
	\end{eqnarray*}
	Furthermore, Parseval’s equality leads to
        \[\sum_{u\in\gf_{2}^n}\AC_F(u,v)^2 = \frac{1}{2^{n}}\sum_{\omega\in\gf_{2}^n}W_F(\omega,v)^4. \]
\end{proof}

{
\begin{Rem}
		\emph{ It should be noted that the relations Eq. (\ref{DW}) and Eq. (\ref{D2W4}) were already obtained in \cite{SAC:GonKho03} and \cite{ZZ1995} for Boolean functions.
		Here we generalize the results to vectorial Boolean functions. }
\end{Rem} 
}

\subsection{Links between the autocorrelation and the DDT}
Zhang et al. in \cite[Section~3]{ZZI2000} showed that, for an \((n,n)\)-function, the row of index~\(a\) in the autocorrelation table 
$b \mapsto \AC_{F}(a,b)$ corresponds to the Fourier transform of the row of index~$a$ in the DDT: $v \mapsto \mathtt{DDT}_F(a,v)$. 
This relation coincides with the one provided in \cite[Proposition 1]{EC:BDKW19}. We here express it in the case of \((n,m)\)-functions.
It is worth noticing that this correspondence points out the well-known relation between the Walsh transform of~$F$ and its DDT exhibited by~\cite{EC:ChaVau94,EC:BloNyb13}.

\begin{Prop}
	\label{DLCT_DDT}
	Let $F$ be an $(n,m)$-function. Then, for any $u\in\gf_{2}^n$ and $v\in\gf_{2}^m$, we have
	
		\begin{eqnarray*}
		\label{DLCT_1_DDT}
		\AC_F(u,v) & = & \sum_{\omega\in\gf_{2}^m}(-1)^{v\cdot\omega} \mathtt{DDT}_F(u,\omega)\\
                \mathtt{DDT}_{F}(u,v) &= &2^{-m} \sum_{\omega \in \gf_2^m} (-1)^{v \cdot \omega} \AC_{F}(u,\omega).
		\end{eqnarray*}
                Most notably,
		$$\sum_{v\in\gf_{2}^m} \AC_F(u,v) = 2^{m} \mathtt{DDT}_F(u,0)  $$ 
		implying  $$\sum_{u\in\gf_2^n,v\in\gf_2^m}\AC_F(u,v)= 2^{m+n},$$
	        and
		\begin{equation}
		\label{D2DDT2}
		\sum_{v\in\gf_2^m} \AC_F(u,v)^2 = 2^{m} \sum_{\omega\in\gf_2^m} \mathtt{DDT}_F(u,\omega)^2.
		\end{equation}
	
\end{Prop}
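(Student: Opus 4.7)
The plan is to obtain everything from a single observation: the map $\omega\mapsto \mathtt{DDT}_F(u,\omega)$ is precisely the distribution of $D_uF$, and the autocorrelation $v\mapsto\AC_F(u,v)$ is its Walsh/Fourier transform on $\gf_2^m$. Concretely, for fixed $u$ I would start from
\[\AC_F(u,v)=\sum_{x\in\gf_2^n}(-1)^{v\cdot(F(x)+F(x+u))}=\sum_{x\in\gf_2^n}(-1)^{v\cdot D_uF(x)},\]
then partition the sum according to the value $\omega=D_uF(x)$. Since $\#\{x:D_uF(x)=\omega\}=\mathtt{DDT}_F(u,\omega)$, this yields the first identity immediately.

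The second identity is just the inverse Fourier transform on $\gf_2^m$: multiplying the first identity by $(-1)^{v\cdot\omega'}$, summing over $v$, and invoking the orthogonality relation $\sum_{v\in\gf_2^m}(-1)^{v\cdot(\omega+\omega')}=2^m\,\delta_{\omega,\omega'}$, one recovers $\mathtt{DDT}_F(u,v)$ from $\AC_F(u,\cdot)$. The third identity follows by specialising the inverse transform to $v=0$, giving $\sum_\omega \AC_F(u,\omega)=2^m\mathtt{DDT}_F(u,0)$.

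For the fourth identity I would sum the third over $u\in\gf_2^n$ and compute $\sum_u\mathtt{DDT}_F(u,0)$ by double counting: this quantity equals $\#\{(x,u):F(x+u)=F(x)\}=\sum_x\#F^{-1}(F(x))$, which reduces to $2^n$ when $F$ is injective (in particular for permutations), producing $2^{m+n}$. For the final Parseval-type identity, I would square the first identity, expand the product of two sums, and swap the order of summation:
\[\sum_{v\in\gf_2^m}\AC_F(u,v)^2=\sum_{\omega,\omega'\in\gf_2^m}\mathtt{DDT}_F(u,\omega)\mathtt{DDT}_F(u,\omega')\sum_{v\in\gf_2^m}(-1)^{v\cdot(\omega+\omega')},\]
and the inner sum collapses to $2^m\delta_{\omega,\omega'}$ by orthogonality, leaving $2^m\sum_\omega\mathtt{DDT}_F(u,\omega)^2$.

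There is no genuine obstacle: every step is an elementary application of Fourier orthogonality on $\gf_2^m$, mirroring the structurally identical proof of Proposition~\ref{DLCT_WT} but carried out over the output space rather than the input space. The only care needed is the bookkeeping for the fourth identity, whose derivation implicitly relies on $F$ being injective; otherwise $\sum_u\mathtt{DDT}_F(u,0)$ picks up extra collision terms and the clean value $2^{m+n}$ is replaced by $2^m\sum_{y}\#F^{-1}(y)^2$.
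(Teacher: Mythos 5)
Your proof is correct and follows essentially the same route as the paper's: the first identity by partitioning the sum over $x$ according to the value $\omega = D_uF(x)$, the second by the inverse Fourier transform on $\gf_2^m$, the third by specialising that inverse transform to $v=0$, and the last by Parseval --- which is exactly your orthogonality computation written out. The one place where you go beyond the paper is the fourth identity, and your caveat there is exactly right: the paper merely asserts ``Obviously, we deduce that $\sum_{u,v}\AC_F(u,v)=2^{m+n}$ holds,'' but this step requires $F$ to be injective. As you observe, $\sum_{u\in\gf_2^n}\mathtt{DDT}_F(u,0)=\#\{(x,u)\,:\,F(x+u)=F(x)\}=\sum_{y\in\gf_2^m}\bigl(\#F^{-1}(y)\bigr)^2$, so in general $\sum_{u,v}\AC_F(u,v)=2^{m}\sum_{y}\bigl(\#F^{-1}(y)\bigr)^2\geq 2^{m+n}$, with equality precisely when $F$ is injective; for a constant function, or indeed for any $(n,m)$-function with $m<n$, the stated value $2^{m+n}$ is wrong. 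So the proposition as printed needs an injectivity (or permutation) hypothesis for that one claim, and your bookkeeping correctly exposes a gap in the paper's argument rather than in yours.
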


\begin{proof}
	
		The first equation holds since  
		\begin{eqnarray*}
		\AC_F(u,v) & = &  \sum_{x\in\gf_{2}^n} (-1)^{v\cdot (F(x)+F(x+u))} \\
		&=& \sum_{\omega\in\gf_{2}^m}(-1)^{v\cdot\omega} \mathtt{DDT}_F(u,\omega)\;.
		\end{eqnarray*}
                The inverse Fourier transform then leads to
                \[\mathtt{DDT}_{F}(u,v) =2^{-m} \sum_{\omega \in \gf_2^m} (-1)^{v \cdot \omega} \AC_{F}(u,\omega).\]
                By applying this relation to \(v=0\), we get
		\[2^m \mathtt{DDT}_F(u,0) = \sum_{\omega \in \gf_2^m}\AC_{F}(u,\omega).\]
                Obviously, we deduce that $\sum_{u\in\gf_2^n,v\in\gf_2^m}\AC_F(u,v)= 2^{m+n}$ holds.
                Moreover, Parseval's relation implies
		\[\sum_{v\in\gf_2^m} \AC_F(u,v)^2 = 2^{m} \sum_{\omega\in\gf_2^m} \mathtt{DDT}_F(u,\omega)^2.\]
\end{proof}

\subsection{Bounds on the absolute indicator}

Similar to other cryptographic criteria, it is interesting and important to know 
how ``good" the absolute indicator of a vectorial Boolean function could be.
It is clear that the absolute indicator of any $(n,m)$-function is upper bounded by $2^{n}$. But finding its smallest possible value is an open question investigated by many authors.
From the definition, the autocorrelation spectrum of \(F\) equals $\{0\}$ if and only if $F$ is a bent function, which implies that $n$ is even and $m\le \frac{n}{2}$.
However, finding lower bounds in other cases is much more difficult.
For instance, Zhang and Zheng conjectured~\cite[Conjecture 1]{ZZ1995} that the absolute indicator of a balanced Boolean function of $n$~variables was at least $2^{\frac{n+1}{2}}$.
But this was later disproved first for odd values of~$n \geq 9$ by modifying the Patterson-Wiedemann construction, namely for $n \in \{9, 11\}$ in~\cite{TIT:KavMaiYuc07}, for $n=15$ in~\cite{TIT:MaiSar02,DAM:Kavut16} and for $n=21$ in~\cite{DM:GanKesMai06}.
For the case $n$ even, \cite{TIT:TanMai18} gave a construction for balanced Boolean functions with absolute indicator strictly less than $2^{n/2}$ when $n \equiv 2 \bmod{4}$.
Very recently, similar examples for $n \equiv 0 \bmod{4}$ were exhibited by~\cite{DCC:KavMaiTan19}.
However, we now show that such small values for the absolute indicator cannot be achieved for \((n,n)\)-\emph{vectorial functions}.

Proposition \ref{DLCT_DDT} leads to the following upper bound on the sum of all squared autocorrelation coefficients in each row. This result can be found in~\cite{FSE:Nyberg94} (see also \cite[Theorem~2]{BCC2006}) in the case of \((n,n)\)-functions. We here detail the proof in the case of \((n,m)\)-functions for the sake of completeness.
\begin{Prop}\label{prop:boundAPN}
    Let $F$ be an \((n,m)\)-function.
    Then, for all $u \in \gf_2^n$, we have
    \begin{equation*}
        \sum_{v \in \gf_2^m} \AC_F(u,v)^2 \geq 2^{n+m+1}\;.
    \end{equation*}
    Moreover, equality holds for all nonzero~$u \in \gf_2^n$ if and only if $F$ is APN.
\end{Prop}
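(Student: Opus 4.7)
The plan is to reduce the statement to a bound on the second moment of a row of the DDT, using the already-established identity (\ref{D2DDT2}) from Proposition~\ref{DLCT_DDT}:
\[\sum_{v\in\gf_2^m} \AC_F(u,v)^2 = 2^{m} \sum_{\omega\in\gf_2^m} \mathtt{DDT}_F(u,\omega)^2.\]
So it suffices to show that, for every $u \in \gf_2^n$,
\[\sum_{\omega\in\gf_2^m} \mathtt{DDT}_F(u,\omega)^2 \geq 2^{n+1},\]
with equality (for $u \neq 0$) exactly when $F$ is APN. I would handle $u=0$ separately and trivially: then $\mathtt{DDT}_F(0,0) = 2^n$ and all other entries vanish, so the sum equals $2^{2n} \geq 2^{n+1}$.

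For $u \neq 0$, the key input is the observation already recalled before the statement (and just after Definition~\ref{DLCT}) that $D_u F(x) = D_u F(x+u)$, which forces every entry $\mathtt{DDT}_F(u,\omega)$ to be a nonnegative even integer. For any such integer $a$, the elementary inequality $a^2 \geq 2a$ holds, with equality iff $a \in \{0,2\}$. Applying this entrywise and summing,
\[\sum_{\omega\in\gf_2^m} \mathtt{DDT}_F(u,\omega)^2 \geq 2 \sum_{\omega\in\gf_2^m} \mathtt{DDT}_F(u,\omega) = 2 \cdot 2^n = 2^{n+1},\]
where in the middle step I use that the row of the DDT at index~$u$ sums to $2^n$ (each $x \in \gf_2^n$ contributes to exactly one $\omega$, namely $\omega = F(x)+F(x+u)$). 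Multiplying through by $2^m$ yields the claimed bound.

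For the equality characterization, the chain of inequalities above is tight for a given nonzero $u$ if and only if every entry $\mathtt{DDT}_F(u,\omega)$ belongs to $\{0,2\}$, which means the derivative $D_u F$ is $2$-to-$1$ onto its image. Requiring this for all nonzero~$u$ is precisely the definition of $F$ being APN, giving the desired equivalence. I do not anticipate a real obstacle here: the only minor point to get right is the justification that the DDT entries are even (which is standard and already stated in the paper), and the handling of the degenerate case $u=0$, which is why the bound in the statement is advertised only as an inequality in that row and the equality is quantified over nonzero~$u$.
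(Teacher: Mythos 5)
Your proof is correct. The first step is identical to the paper's: both reduce the claim to the bound $\sum_{\omega} \mathtt{DDT}_F(u,\omega)^2 \geq 2^{n+1}$ via the Parseval-type identity of Proposition~\ref{DLCT_DDT}. Where you diverge is in how you prove that bound. The paper applies the Cauchy--Schwarz inequality
\[
\Bigl(\sum_{\omega} \mathtt{DDT}_F(u,\omega)\Bigr)^2 \leq \Bigl(\sum_{\omega} \mathtt{DDT}_F(u,\omega)^2\Bigr)\cdot \#\{\omega : \mathtt{DDT}_F(u,\omega)\neq 0\}
\]
together with the bound $\#\{\omega : \mathtt{DDT}_F(u,\omega)\neq 0\}\leq 2^{n-1}$ (which itself rests on the entries being even), and then tracks the two equality conditions. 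You instead use the evenness of the entries directly, via the pointwise inequality $a^2 \geq 2a$ for nonnegative even integers $a$, with equality iff $a\in\{0,2\}$; summing over the row immediately gives $2\cdot 2^n = 2^{n+1}$ and the equality case reads off as ``all entries in $\{0,2\}$,'' i.e.\ APN-ness. Your route is more elementary (no Cauchy--Schwarz, and the equality analysis is a single condition rather than the conjunction of two), while the paper's version makes explicit the intermediate quantity $\#\{\omega : \mathtt{DDT}_F(u,\omega)\neq 0\}$, which is of independent interest. Both arguments ultimately exploit the same structural fact, namely that DDT entries in a nonzero row are even and sum to $2^n$. Your explicit treatment of the degenerate row $u=0$ is also sound ($2^{2n}\geq 2^{n+1}$ for $n\geq 1$), and is a point the paper glosses over.
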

\begin{proof}
  From~(\ref{D2DDT2}), we have that, for all $u \in \gf_2^n$,
    \begin{equation*}
        \sum_{v \in \gf_2^m} \AC_{F}(u,v)^2 = 2^m \sum_{\omega\in\gf_2^m} \mathtt{DDT}_F(u,\omega)^2
    \end{equation*}
    Cauchy-Schwarz inequality implies that
    \begin{equation*}
      \left(\sum_{\omega \in \gf_2^m} \mathtt{DDT}_{F}(u,\omega)\right)^2 \leq \left(\sum_{\omega \in \gf_2^m} \mathtt{DDT}_{F}(u,\omega)^2\right) \times \#\{\omega\in \gf_2^m | \mathtt{DDT}_{F}(u,\omega) \neq 0\}\;,
    \end{equation*}
    with equality if and only if all nonzero elements in $\{\mathtt{DDT}_{F}(u,\omega) | \omega \in \gf_2^m\}$ are equal.
    Using that
    \begin{equation*}
        \#\{\omega \in \gf_2^m | \mathtt{DDT}_{F}(u,\omega) \neq 0\} \leq {2^{n-1}}
    \end{equation*}
    with equality for all nonzero $u$ if and only if $F$ is APN, we deduce that
    \begin{equation*}
        \sum_{\omega\in \gf_2^m} \mathtt{DDT}_{F}^2(u,\omega) \geq 2^{n+1}
    \end{equation*}
    with equality for all nonzero $u$ if and only if $F$ is APN.
    Equivalently, we deduce that
    \begin{equation*}
        \sum_{v \in \gf_2^m} \AC_{F}^2(u,v) \geq 2^{n+m+1}
    \end{equation*}
    with equality for all nonzero $u$ if and only if $F$ is APN.
    \end{proof}

From the lower bound on the sum of all squared entries within a row of the autocorrelation table, we deduce the following lower bound on the absolute indicator.
\begin{Th} \label{Th_Bound}
	Let $F$ be an $(n,m)$-function, where $m\ge n$. Then 
	\begin{equation}
	\label{DLU_bound}
	\Delta_F\ge \sqrt{\frac{2^{m+n+1}-2^{2n}}{2^m-1}}. 
	\end{equation}
        Most notably, if \(m=n\),
        \[\Delta_F > 2^{n/2}\;.\]
\end{Th}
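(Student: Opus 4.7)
The plan is to derive the bound as a direct consequence of Proposition~\ref{prop:boundAPN}, combined with a simple averaging argument over the $2^m-1$ nonzero output masks.

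First, I would fix an arbitrary nonzero $u\in\gf_{2}^n$ and isolate the $v=0$ contribution in the row sum. Since $\AC_F(u,0)=\sum_{x\in\gf_{2}^n}(-1)^{0}=2^n$, Proposition~\ref{prop:boundAPN} applied to this $u$ yields
\[
\sum_{v\in\gf_{2}^m\setminus\{0\}}\AC_F(u,v)^2 \;=\; \sum_{v\in\gf_{2}^m}\AC_F(u,v)^2 - 2^{2n} \;\geq\; 2^{n+m+1}-2^{2n}.
\]
Here the hypothesis $m\geq n$ guarantees that the right-hand side is positive (indeed $2^{n+m+1}\geq 2^{2n+1}>2^{2n}$), so the bound is meaningful.

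Next, I would apply the pigeonhole/averaging principle: among the $2^m-1$ nonzero values of $v$, at least one $v^\star$ must satisfy
\[
\AC_F(u,v^\star)^2 \;\geq\; \frac{2^{n+m+1}-2^{2n}}{2^m-1}.
\]
Since $(u,v^\star)$ has both coordinates nonzero, we obtain by the definition of the absolute indicator
\[
\Delta_F^2 \;\geq\; \AC_F(u,v^\star)^2 \;\geq\; \frac{2^{n+m+1}-2^{2n}}{2^m-1},
\]
which is exactly the claimed lower bound~\eqref{DLU_bound}.

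For the ``most notably'' assertion, I would specialise to $m=n$. The bound then reads
\[
\Delta_F^2 \;\geq\; \frac{2^{2n+1}-2^{2n}}{2^n-1} \;=\; \frac{2^{2n}}{2^n-1}.
\]
Since $2^n-1<2^n$, the right-hand side strictly exceeds $2^n$, so taking square roots gives the strict inequality $\Delta_F>2^{n/2}$. No step here is a serious obstacle — the only thing to watch is that the isolation of the $v=0$ term and the averaging must be done on nonzero $v$, which correctly matches the definition of $\Delta_F$ (which excludes $v=0$).
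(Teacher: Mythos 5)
Your proposal is correct and follows essentially the same route as the paper: isolate the $v=0$ term (worth $2^{2n}$) from the row-sum bound of Proposition~\ref{prop:boundAPN}, then bound the remaining sum over the $2^m-1$ nonzero masks by $\Delta_F^2(2^m-1)$, which is just your averaging argument phrased the other way. The specialization to $m=n$ is also handled identically, so there is nothing to add.
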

\begin{proof}
  From the previous proposition, we deduce that
  \[\sum_{v \in \gf_2^m \backslash\{0\}} \AC_{F}(u,v)^2 \geq 2^{n+m+1}- 2^{2n}\;.\]
  Since
  \begin{equation*}
        \sum_{v \in \gf_2^m \backslash\{0\}} \AC_{F}(u,v)^2 \leq \Delta_{F}^2 (2^m-1)\;,
    \end{equation*}
  the result directly follows.
  When \(m=n\), the bound corresponds to
  \[\sqrt{\frac{2^{2n}}{2^n-1}} > 2^{n/2}\;.\]
\end{proof}

	Note that the condition $m \geq n$ in Theorem~\ref{Th_Bound} is to ensure
	 the term under the square root is strictly greater than~\(0\).

%


\subsection{Divisibility of the autocorrelation} 

In this subsection, we investigate the divisibility property of the autocorrelation coefficients of vectorial Boolean functions.   
	\begin{Prop}\label{prop-divisibility}
		Let $n>2$ and $F: \gf_{2}^n \to \gf_{2}^m$ be a vectorial Boolean function with algebraic degree at most~$d$. Then, for any $u\in\gf_{2}^n$ and $v\in\gf_2^m$, $\AC_F(u,v)$ is divisible by $2^{\lceil \frac{n-1}{d-1} \rceil+1 }.$  In particular, when $m=n$ and $F$ is a permutation,   $\AC_F(u,v)$ is divisible by $8$. 
	\end{Prop}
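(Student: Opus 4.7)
The plan is to reduce the claim to McEliece's classical divisibility theorem for Walsh sums of Boolean functions of bounded algebraic degree. The starting point is the identity $\AC_F(u,v) = W_{v \cdot D_u F}(0)$ already recorded after Definition~\ref{AC_Def}. The affine case $d \le 1$ is trivial ($D_u F$ is constant and $\AC_F(u,v) = \pm 2^n$), so assume $d \ge 2$. Set $g := v \cdot D_u F$; this is an $n$-variable Boolean function of algebraic degree at most $d-1$, since the degree-$d$ terms of $F(x)$ and $F(x+u)$ cancel in the discrete derivative.

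The next step exploits the fact that $g$ is $u$-periodic, i.e., $g(x+u)=g(x)$ for every $x$, because $D_uF(x+u)=D_uF(x)$. Picking any $(n-1)$-dimensional linear subspace $H$ of $\gf_2^n$ complementary to $\langle u\rangle$, the partition $\gf_2^n = H\,\sqcup\,(H+u)$ and the periodicity yield
\begin{equation*}
\AC_F(u,v) = \sum_{x\in\gf_2^n}(-1)^{g(x)} = 2\sum_{x\in H}(-1)^{g(x)}.
\end{equation*}
The restriction $g|_H$ is a Boolean function on $H\cong\gf_2^{n-1}$ whose degree is still bounded by $d-1$. McEliece's theorem (the characteristic-$2$ form of Ax's theorem) states that, for any Boolean function $h$ of degree $\delta$ on $\gf_2^k$, the sum $\sum_x(-1)^{h(x)}$ is divisible by $2^{\lceil k/\delta\rceil}$. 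Applied with $k=n-1$ and $\delta=d-1$, it gives $\sum_{x\in H}(-1)^{g(x)}\equiv 0 \pmod{2^{\lceil (n-1)/(d-1)\rceil}}$, and multiplying by the explicit factor $2$ delivers the desired divisibility of $\AC_F(u,v)$ by $2^{\lceil (n-1)/(d-1)\rceil+1}$.

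For the ``in particular'' part, an $(n,n)$-permutation has algebraic degree at most $n-1$, so for $n>2$ one has
\begin{equation*}
\frac{n-1}{d-1} \;\ge\; \frac{n-1}{n-2} \;>\; 1,
\end{equation*}
whence $\lceil(n-1)/(d-1)\rceil\ge 2$ and $\AC_F(u,v)$ is divisible by $2^{3}=8$. (The affine permutation case $d=1$ is again handled directly: $\AC_F(u,v)=\pm 2^n$ is a multiple of $8$ as soon as $n\ge 3$.) The only nontrivial ingredient is the correct invocation of McEliece's theorem on the restriction $g|_H$, which I regard as the main, and essentially the only, substantive obstacle; all other steps are routine, but taking care to halve the ambient dimension via the pairing $x\leftrightarrow x+u$ is what sharpens the generic bound $2^{\lceil n/(d-1)\rceil}$ into the stated $2^{\lceil(n-1)/(d-1)\rceil+1}$.
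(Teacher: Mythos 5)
Your proof is correct and follows essentially the same route as the paper's: write $\AC_F(u,v)=W_{v\cdot D_uF}(0)$, use the $u$-periodicity of the derivative to extract a factor $2$ and reduce to a degree-$(d-1)$ function on an $(n-1)$-dimensional space, then invoke McEliece's divisibility theorem. The only cosmetic difference is that you work directly with a complement $H$ of $\langle u\rangle$ whereas the paper first normalizes $u=e_n$ by an affine change of variables; your explicit treatment of the $d\le 1$ edge case and your (correct) statement of McEliece's bound as $2^{\lceil k/\delta\rceil}$ for degree $\delta$ are welcome precisions.
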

	\begin{proof}
		By definition,  for any $u\in\gf_{2}^n$ and $v\in\gf_2^m$,
		$$\AC_F(u,v) = W_{D_uf_v} (0).$$
		Note that for given $u\in\gf_{2}^n$ and $v\in\gf_2^m$, the Boolean function $$h_{u,v} = D_uf_v=v\cdot(F(x)+F(x+u)),$$ satisfies two properties: $\deg(h_{u,v})\le d-1$ since $F$ has degree at most $d$ and $h_{u,v}(x) = h_{u,v}(x+u).$ 
		
		We now focus on the divisibility of $W_{h_{u,v}}(0)$. First, assume for simplicity that $u = e_n=(0, \cdots, 0, 1)$, we discuss the general case afterwards. 
		Since $h_{e_n,v}(x+e_n) = h_{e_n,v}(x)$, the value of $h_{e_n,v}(x)$ is actually determined by the first $(n-1)$ coordinates of $x$. Hence $h_{e_n,v}(x)$ can be expressed as $h_{e_n,v}(x)=h(x')\,: \gf_{2}^{n-1} \rightarrow \gf_2$ and  the Walsh transform of $h_{e_n,v}$ at point $0$ satisfies
		$$ W_{h_{e_n,v}}(0) = \sum_{x'\in\gf_{2}^{n-1}, x_n\in\gf_{2}} (-1)^{h_{e_n,v}(x',x_n)} = 2 \cdot \sum_{x'\in\gf_{2}^{n-1}} (-1)^{h(x')} = 2\cdot W_h(0).  $$
		 It is well-known that the values taken by the Walsh transform of a Boolean function $f$ from $\gf_{2}^n$ to $\gf_{2}$ with degree $d$ are divisible by $2^{\lceil \frac{n}{d-1} \rceil}$ (see \cite{McEliece1972} or \cite[Section 3.1]{Carlet2010}). We then deduce that $W_h(0)$ is divisible by $2^{\lceil \frac{n-1}{d-1} \rceil}$, implying that $W_{h_{e_n,v}}(0)$ is divisible by $2^{\lceil \frac{n-1}{d-1} \rceil+1}$. Most notably, if $m=n$ and $F$ is bijective, then $d<n$. We then have that $$\Big\lceil \frac{n-1}{d-1}\Big\rceil\ge 2,$$ implying that $\AC_F(u,v)$ is divisible by $8$.
		
		In the case that $u\neq e_n$, we can find a linear transformation $L$ such that $L(e_n)=u,$ with which we have the affine equivalent function $G=F\circ L.$ Due to the affine invariance of $G$'s and $F$'s autocorrelation spectra, the same holds for $\AC_G(u,v)$ in this case. 
	\end{proof}
	
	In particular, for $(n, m)$-functions of algebraic degree $3$, we have the following result.
	\begin{Prop} \label{prop-cubic}
		Suppose an $(n, m)$-functions  $F$ has algebraic degree~$3$.
		Then for nonzero $u$ and $v$, we have
		\begin{equation*}
		|\AC_{F}(u,v)| \in \left\{0, 2^{\frac{n+d(u,v)}{2}}\right\},
		\end{equation*}
		where $d(u,v)  = \dim \left\{w\in \gf_{2}^n ~|~ D_{u}D_{w}f_{v} = c\right\}$ and $c \in \gf_2$ is constant.
	\end{Prop}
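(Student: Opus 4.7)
The plan is to apply the classical structure theorem for Walsh transforms of quadratic Boolean functions, exploiting the fact that differentiating $F$ in one direction drops the algebraic degree by one.

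First I would set $g(x) = D_{u} f_{v}(x) = v \cdot F(x) + v \cdot F(x+u)$, so that, as noted in the preceding subsections, $\AC_{F}(u,v) = W_{D_u F}(0,v) = W_{g}(0)$. Since $\deg F \leq 3$ and derivation strictly reduces the degree, $\deg g \leq 2$, so $g$ is an affine or purely quadratic Boolean function on $\gf_{2}^{n}$.

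Next I would invoke the classical formula (see e.g.\ MacWilliams--Sloane or \cite[Section 3.1]{Carlet2010}) which states that for any quadratic Boolean function $g \colon \gf_{2}^{n} \to \gf_{2}$, the magnitudes of its Walsh coefficients are determined by the radical of the associated symplectic bilinear form $B_{g}(x,y) = g(x+y)+g(x)+g(y)+g(0)$. More precisely, setting $V_{g} = \{w \in \gf_{2}^{n} : D_{w} g \text{ is a constant function}\}$, one has $V_{g} = \mathrm{rad}(B_{g})$, the rank of $B_{g}$ equals $n - \dim V_{g}$ and is even, and
\[
|W_{g}(\omega)| \in \Bigl\{0,\; 2^{(n+\dim V_{g})/2}\Bigr\}
\quad \text{for all } \omega \in \gf_{2}^{n}.
\]
Applying this to $\omega = 0$ gives $|\AC_{F}(u,v)| \in \{0, 2^{(n+\dim V_{g})/2}\}$.

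Finally I would identify $\dim V_{g}$ with the integer $d(u,v)$ appearing in the statement. Since derivatives commute, $D_{w} g = D_{w} D_{u} f_{v} = D_{u} D_{w} f_{v}$, so $V_{g}$ is exactly the set $\{w \in \gf_{2}^{n} : D_{u} D_{w} f_{v} = c\}$ for some constant $c \in \gf_{2}$ (depending on $w$), which is a $\gf_{2}$-linear subspace of dimension $d(u,v)$. Substituting yields $|\AC_{F}(u,v)| \in \{0, 2^{(n+d(u,v))/2}\}$, as required.

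The argument is essentially an assembly of known facts; the only minor subtlety to handle is the degenerate case where $g$ is affine (which happens when $D_{u}f_{v}$ has degree at most $1$), but this is absorbed by the formula, since then $V_{g} = \gf_{2}^{n}$, $d(u,v) = n$, and $|W_{g}(0)| \in \{0, 2^{n}\}$, consistent with the stated set.
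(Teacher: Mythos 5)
Your proof is correct and follows essentially the same route as the paper: both rest on the observation that $D_uf_v$ has degree at most $2$, so its Walsh coefficient at $0$ is governed by the radical of the associated symplectic form, which is exactly the subspace $\{w \in \gf_2^n : D_uD_wf_v \text{ constant}\}$ defining $d(u,v)$. The only difference is that the paper rederives this standard quadratic-function fact inline, by squaring the autocorrelation sum and isolating the second-order derivative $D_uD_wf_v$, whereas you invoke the classical structure theorem directly.
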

	
	\begin{proof} Since $F$ has algebraic degree $3$, the derivative of order~two $D_uD_wf_{v}(x)=A_{u,v}(w)\cdot x+C_{u,v}(w)$ is affine over $\gf_{2^n}$, where 
		$A_{u,v}(w)$ and $C_{u,v}(w)$ belong to $\gf_2$. Moreover, 
		the function $w \mapsto C_{u,v}(w)$ is linear over the linear subspace $L(u, v) = \{ w \in \gf_{2}^n : A_{u,v}(w) = 0\}= \{ w \in \gf_{2}^n : D_uD_wf_{v}(x) = C_{u,v}(w)\}$.
		From the definition of autocorrelation,  we have 
		\begin{equation}
		\begin{array}{rcl}
		\AC_F(u, v)^2 & = & \left(\sum\limits_{x\in \gf_{2^n}}(-1)^{v\cdot (F(x+u)+F(x)}\right)^2 
		\\ &= & \sum\limits_{x, y\in \gf_{2}^n}(-1)^{v\cdot(F(x+u)+F(x) + F(y+u) +F(y))}
		\\&=& \sum\limits_{x, w\in \gf_{2}^n}(-1)^{v\cdot(F(x+u)+F(x) + F(x+w+u) +F(x+w))}
		\\&=& \sum\limits_{x, w\in \gf_{2}^n}(-1)^{D_u D_wf_v(x)}
		\\&=& \sum\limits_{w\in \gf_{2}^n}(-1)^{C_{u,v}(w)} \sum\limits_{x\in \gf_{2^n}}(-1)^{A_{u,v}(w)\cdot x}.
		\end{array}
		\end{equation}
		Hence, 
		$$
		\AC_F(u, v)^2 = \begin{cases}
		0, & \text{ if } A_{u,v}(w) \neq 0,  \\
		2^{n+d(u,v)} & \text{ if } A_{u,v}(w) = 0 \text{ and } C_{u,v}(w) = c \text{ in } L(u,v).
		\end{cases} 
		$$ 
		The desired conclusion directly follows.
	\end{proof}
    Proposition \ref{prop-cubic} implies that any entry in the autocorrelation table of a cubic function is 
    divisible by $2^{\frac{n+d}{2}}$, where $d$ is the smallest integer among $d(u,v)$ when $u$, $v$ run through 
    $\gf_{2}^n\setminus \{0\}$ and       $\gf_{2}^m\setminus \{0\}$, respectively. 
    It is clear that $d\geq 1$. Furthermore, when $d\geq 2$, Proposition 
    \ref{prop-cubic}
    improves the result in Proposition \ref{prop-divisibility}.

\subsection{Invariance under Equivalence Relations}

Let $n, m$ be two positive integers. There are several equivalence relations of functions from $\gf_2^n$ to $\gf_2^m$  and they play vital roles in classifying functions with good properties, like AB and APN functions \cite{BCP2006}. In this subsection, we first recall three equivalence relations, i.e., affine, EA and CCZ~\cite{CCZ98}. Then we study the autocorrelation and related concepts with respect to these equivalence relations.

\begin{Def}
	\cite{Bud2014}
	Let $n,m$ be two positive integers. Two functions $F$ and $F^{'}$ from $\gf_2^n$ to $\gf_2^m$ are called 
	\begin{enumerate}
		\item {affine equivalent (resp. linear equivalent) }  if $F^{'} = A_1\circ F\circ A_2$, where the mappings $A_1$ and $A_2$ are affine (resp. linear) permutations of $\gf_2^m$ and $\gf_2^n$, respectively;
		\item { extended affine equivalent} (EA equivalent) if $F^{'} =  A_1\circ F \circ A_2 + A$, where the mappings $A : \gf_2^n\to\gf_2^m, A_1 : \gf_2^m\to\gf_2^m,  A_2 : \gf_2^n\to\gf_2^n$ are affine and where $A_1$ and $A_2$ are permutations;
		\item { Carlet-Charpin-Zinoviev equivalent } (CCZ equivalent) if for some affine permutation $\mathcal{L}$ over $\gf_2^n \times \gf_2^m$, the image by \(\mathcal{L}\) of the graph of $F$ is the graph of $F^{'}$, that is $\mathcal{L}(G_F)=G_{F^{'}}$, where $G_F=\{ (x,F(x)) | x\in\gf_2^n  \}$ and $G_{F^{'}}= \{ (x,F^{'}(x)) | x\in\gf_2^n  \} $.
	\end{enumerate}
\end{Def}

It is known that affine equivalence is a particular case of EA-equivalence, which is again a particular case of CCZ-equivalence. In addition, every permutation is CCZ-equivalent to its compositional inverse.  
Two important properties of cryptographic functions, the differential uniformity and the nonlinearity, are invariant under CCZ-equivalence.  
However, as we will show in this subsection, 
the autocorrelation spectrum is invariant under affine equivalence,
and further its extended autocorrelation spectrum, i.e., the multiset
$\{ |\AC_F(u,v)|\; :\; u \in \gf_{2}^n, v\in \gf_{2}^m \}$, is
invariant under extended affine equivalence. However, they are generally not invariant under compositional inverse, thereby are not invariant under CCZ-equivalence.

\begin{Th}
	Assume two $(n, m)$-functions $F$ and $F^{'}$ are EA-equivalent, then the extended autocorrelation spectrum of $F$ equals that of $F'$.
	In particular, if they are affine equivalent, then the autocorrelation spectrum of $F$ equals that of $F'$.
\end{Th}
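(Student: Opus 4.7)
The plan is to start with the definition $F'=A_1\circ F\circ A_2 + A$ in the EA case, decomposing each affine map into its linear and constant parts: write $A_1(y)=L_1(y)+c_1$, $A_2(x)=L_2(x)+c_2$ and $A(x)=L(x)+c$, where $L_1,L_2$ are linear bijections and $L$ is linear. Then I would compute the difference
\[
F'(x)+F'(x+u) = L_1\!\bigl(F(L_2(x)+c_2)+F(L_2(x)+c_2+L_2(u))\bigr)+L(u),
\]
observing that the constants $c_1,c_2,c$ cancel out because $\AC_{F'}(u,v)$ depends only on a sum of two images. This is the key algebraic identity driving the proof.

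Next, for the inner product with $v$, I would move $L_1$ to the mask side by introducing the adjoint (transpose) $L_1^{*}$ defined by $v\cdot L_1(y)=L_1^{*}(v)\cdot y$, so that
\[
v\cdot\bigl(F'(x)+F'(x+u)\bigr)=L_1^{*}(v)\cdot\bigl(F(y)+F(y+L_2(u))\bigr)+v\cdot L(u),
\]
where $y=L_2(x)+c_2$. The substitution $x\mapsto y$ is a bijection on $\gf_2^n$, hence
\[
\AC_{F'}(u,v)=(-1)^{v\cdot L(u)}\,\AC_{F}\!\bigl(L_2(u),L_1^{*}(v)\bigr).
\]
Because $L_2\colon\gf_2^n\to\gf_2^n$ and $L_1^{*}\colon\gf_2^m\to\gf_2^m$ are linear bijections, the map $(u,v)\mapsto(L_2(u),L_1^{*}(v))$ is a bijection of $\gf_2^n\times\gf_2^m$ sending nonzero pairs to nonzero pairs.

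From this identity the two claims follow immediately. Taking absolute values eliminates the sign $(-1)^{v\cdot L(u)}$ and yields
\[
|\AC_{F'}(u,v)|=|\AC_{F}(L_2(u),L_1^{*}(v))|,
\]
so the extended autocorrelation spectra of $F$ and $F'$ coincide as multisets. In the purely affine case one has $A=0$, hence $L=0$, the sign factor disappears, and the identity $\AC_{F'}(u,v)=\AC_{F}(L_2(u),L_1^{*}(v))$ holds with signs; restricting to $u\neq 0$, $v\neq 0$ and using the bijectivity of $L_2$ and $L_1^{*}$ gives equality of the autocorrelation spectra $\Lambda_F=\Lambda_{F'}$.

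The only delicate point is handling the affine summand $A$ correctly: a priori one might worry that the added linear term $L(x)$ contributes something nontrivial, but it collapses to $L(u)$ after differencing and only introduces a global sign, which is exactly why the spectrum of absolute values is preserved under EA equivalence while the signed spectrum is not. Introducing the adjoint $L_1^{*}$ and carefully absorbing constants in the substitution $y=L_2(x)+c_2$ constitute essentially all of the work.
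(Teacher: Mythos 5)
Your proof is correct and follows essentially the same route as the paper's: decompose the affine maps into linear parts plus constants, observe that the constants cancel in the difference $F'(x)+F'(x+u)$, transfer $L_1$ to the mask via its adjoint, and substitute $y=A_2(x)$ to obtain $\AC_{F'}(u,v)=(-1)^{v\cdot L(u)}\AC_{F}\bigl(L_2(u),L_1^{\mathrm{T}}(v)\bigr)$. Your explicit remark that $(u,v)\mapsto(L_2(u),L_1^{\mathrm{T}}(v))$ is a bijection preserving nonzero pairs is a detail the paper leaves implicit, but the argument is the same.
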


\begin{proof}
	Since $F$ and $F^{'}$ are EA equivalent, there exist affine mappings  $A : \gf_2^n\to\gf_2^m, A_1 : \gf_2^m\to\gf_2^m,  A_2 : \gf_2^n\to\gf_2^n$, where $A_1, A_2$ are permutations, such that $F^{'} =  A_1\circ F \circ A_2 + A$. Assume that the linear parts of $A, A_1, A_2$ are $L, L_1, L_2$ respectively. Then for any  $u\in\gf_{2}^n\backslash\{0\}$ and $v\in\gf_{2}^m\backslash\{0\}$,
	\begin{eqnarray*}
		\AC_{F^{'}}(u,v) & = & \sum_{x\in\gf_2^n}(-1)^{v\cdot \left( F^{'}(x) + F^{'}(x+u) \right)} \\
		&=& \sum_{x\in\gf_2^n} (-1)^{ v\cdot  \left( A_1\circ F \circ A_2(x) + A(x) +   A_1\circ F \circ A_2(x+u) + A(x+u)     \right)  } \\
		&=& (-1)^{v\cdot L(u)} \sum_{x\in\gf_2^n} (-1)^{v\cdot  \left( A_1\circ F \circ A_2(x) +   A_1\circ F \circ A_2(x+u)   \right)} \\
		&=&  (-1)^{v\cdot L(u)} \sum_{x\in\gf_2^n} (-1)^{ v \cdot L_1 \left(  F \circ A_2(x) + F \circ A_2(x+u) \right)   } \\
		&=& (-1)^{v\cdot L(u)} \sum_{x\in\gf_2^n} (-1)^{L_1^{\mathrm{T}}(v) \cdot \left(  F \circ A_2(x) + F \circ A_2(x+u) \right)  }\\
		&=& (-1)^{v\cdot L(u)}  \sum_{y\in\gf_2^n} (-1)^{L_1^{\mathrm{T}}(v) \cdot \left( F(y) + F  \left( y+L_2(u)  \right)   \right) } \\
		&=& (-1)^{v\cdot L(u)} 	\AC_{F}(L_2(u),L_1^{\mathrm{T}}(v) ),
	\end{eqnarray*}
	where $L_1^{\mathrm{T}}$ denotes the transpose of $L_1$.  Moreover, when $F$ and $F^{'}$ from $\gf_{2}^n$ to $\gf_{2}^m$ are affine equivalent, namely, $A=0$, we have 
	$$ 	\AC_{F^{'}}(u,v) =  \AC_{F}(L_2(u),L_1^{\mathrm{T}}(v) ). $$
\end{proof}

To examine the behavior under CCZ equivalence, we focus on the autocorrelation of a permutation and the autocorrelation of its compositional inverse.
When $n=m$ and $F$ permutes $\gf_{2}^n$, Zhang et al. showed in \cite[Corollary 1]{ZZI2000} that $$ \mathtt{ACT}_{F^{-1}} = H^{-1}\cdot \mathtt{ACT}_F \cdot H, $$ which in our notation is 
\begin{equation}\label{Eq-DLCT-Inv}
\AC_{F^{-1}}(u,v) = \frac{1}{2^n}\sum_{a,b\in\gf_{2}^n}(-1)^{u\cdot a+ v\cdot b} \AC_{F}(a,b). 
\end{equation}
The relation in Eq. \eqref{Eq-DLCT-Inv} indicates that the autocorrelation spectrum of an $(n,n)$-permutation
$F$ is in general not equal to that of $F^{-1}$.

This observation is indeed confirmed by many examples, in which an $(n,n)$-permutation $F$ has linear structures but its inverse has not.  Recall from \cite{ZZI2000}  that a linear structure for an $(n,m)$-function $F$ is a tuple $(u, v)\in \gf_{2^n}\times \gf_{2^m}$ such that	
$x \mapsto v\cdot (F(x)+F(x+u))$ is constant, zero or one, and $\AC_F(u,v)=\pm 2^{n}$ if and only $(u, v)$ forms a linear structure.
For instance, the S-boxes from \textsc{safer}~\cite{FSE:Massey93}, \textsc{SC2000}~\cite{FSE:SYYTIYTT01}, and \textsc{Fides}~\cite{CHES:BBKMW13} have linear structures in one direction but not in the other direction. This is also the case of the infinite family formed by the Gold permutations as analyzed in Section~\ref{sec:quadratic}.

Below, we also provide an example that demonstrates that the autocorrelation spectrum is not invariant under EA-equivalence.

\begin{example}
  Let $F(x) = \frac{1}{x}\in\gf_{2^7}[x]$ and $F^{'}(x) = \frac{1}{x}+x$. Then $F$ and $F^{'}$ are EA-equivalent. However, $\Lambda_F = \{ -24, -16,-8, 0, 8, 16\}$ while $\Lambda_{F^{'}} = \{ -24, -16,-8, 0, 8, 16, 24\}$. 
\end{example}

In \cite{LP2007}, the authors classified all optimal permutations over $\gf_{2}^4$ having the best differential uniformity and nonlinearity (both $4$) up to affine equivalence and found that there are only $16$ different optimal S-boxes, see Table \ref{Tab-4Sbox}. Based on the classification of optimal S-boxes, we exhaust all possibilities of the autocorrelation spectra of optimal S-boxes in Table \ref{Tab-4Autocorrelation}, where the subscript of each autocorrelation value indicates the number of its occurrences in the spectrum. 

\begin{table}[!htbp]
	\caption{Representatives for all $16$ classes of optimal $4$ bit Sboxes }\label{Tab-4Sbox}
	\centering
	\begin{tabular}{cc}	
		\hline
		$F_0$ ~&~  $ 0, 1, 2, 13, 4, 7, 15, 6, 8, 11, 12, 9, 3, 14, 10, 5 $ \\
		$F_1$ ~&~	$0, 1, 2, 13, 4, 7, 15, 6, 8, 11, 14, 3, 5, 9, 10, 12 $ \\
		$F_2$	~&~ $0, 1, 2, 13, 4, 7, 15, 6, 8, 11, 14, 3, 10, 12, 5, 9 $ \\
		$F_3$	~& ~   $0, 1, 2, 13, 4, 7, 15, 6, 8, 12, 5, 3, 10, 14, 11, 9$    \\
		$F_4$	 ~&~	$ 0, 1, 2, 13, 4, 7, 15, 6, 8, 12, 9, 11, 10, 14, 5, 3 $ \\
		$F_5$	 ~&~  $0, 1, 2, 13, 4, 7, 15, 6, 8, 12, 11, 9, 10, 14, 3, 5 $  \\
		$F_6$ ~&~  $0, 1, 2, 13, 4, 7, 15, 6, 8, 12, 11, 9, 10, 14, 5, 3$       \\
		$F_7$ ~&~	$0, 1, 2, 13, 4, 7, 15, 6, 8, 12, 14, 11, 10, 9, 3, 5$ \\
		$F_8$	~&~ $0, 1, 2, 13, 4, 7, 15, 6, 8, 14, 9, 5, 10, 11, 3, 12$ \\
		$F_9$	~& ~  $ 0, 1, 2, 13, 4, 7, 15, 6, 8, 14, 11, 3, 5, 9, 10, 12 $ \\
		$F_{10}$	 ~&~ $0, 1, 2, 13, 4, 7, 15, 6, 8, 14, 11, 5, 10, 9, 3, 12$	 \\
		$F_{11}$	 ~&~ $0, 1, 2, 13, 4, 7, 15, 6, 8, 14, 11, 10, 5, 9, 12, 3$  \\
		$F_{12}$	~&~ $0, 1, 2, 13, 4, 7, 15, 6, 8, 14, 11, 10, 9, 3, 12, 5$ \\
		$F_{13}$	~& ~  $0, 1, 2, 13, 4, 7, 15, 6, 8, 14, 12, 9, 5, 11, 10, 3$\\
		$F_{14}$	 ~&~ $0, 1, 2, 13, 4, 7, 15, 6, 8, 14, 12, 11, 3, 9, 5, 10$	 \\
		$F_{15}$	 ~&~ $0, 1, 2, 13, 4, 7, 15, 6, 8, 14, 12, 11, 9, 3, 10, 5$ \\	
		\hline
	\end{tabular}	
\end{table}

\begin{table}[!htbp] \renewcommand{\arraystretch}{1.3}
	\caption{Autocorrelation spectrum of $F_i$ for $ 0\le i\le 15$}\label{Tab-4Autocorrelation}
	\centering
	\begin{tabular}{c | c}	
		\hline
		$F_i$ &  Autocorrelation spectrum \\
		\hline
		$i \in \{ 3, 4, 5, 6, 7,11, 12, 13\}$ & $\left\{ -8^{60}, 0^{135}, 8^{30}  \right\}$ \\
		$i \in \{ 0,1,2,8\}$ & $\left\{ -16^{6}, -8^{48}, 0^{144}, 8^{24}, 16^3  \right\}$ \\
		$i \in \{ 9,10,14,15\}$ & $\left\{ -16^{2}, -8^{56}, 0^{138}, 8^{28}, 16^1  \right\}$ \\
		\hline
	\end{tabular}
\end{table}

\subsection{{Autocorrelation of Plateaued, AB and APN functions}}
APN and AB functions provide optimal resistance against differential attacks and linear attacks, respectively. Many researchers have studied some other properties of APN and AB functions (see for example \cite{Bud2014}). 
This subsection will investigate the autocorrelation of these optimal functions. We start with a general result for plateaued functions, which generalizes a result from~\cite{SAC:GonKho03}, where the authors studied the autocorrelation of 
		a plateaued Boolean function $f$ in terms of its dual function.

\begin{Prop}
	\label{Prop_AB}
	Let \(F\) be an $(n,m)$-plateaued function. For $v\in\gf_2^m\backslash\{0\}$, we denote the amplitude of the component~\(F_v\) by $2^{r_v}$ and define a  dual
	Boolean function of $f_v$ as
	\begin{equation}
\label{gv}
\widetilde{f}_v(b) = 	\left\{
\begin{array}{lr}
1,  &~  \text{if}~ W_{f_v}(b) \neq0, \\
0, &~ \text{if}~ W_{f_v}(b)=0.
\end{array}
\right.
\end{equation}	
	Then 
	$$\AC_F(u,v) = - 2^{2r_v-n-1}W_{\widetilde{f}_v}(u).$$
	Furthermore, when $F$ is an AB function from $\gf_2^n$ to itself, namely, $r_v=2^{\frac{n+1}{2}}$ for any $v\in\gf_{2}^n\backslash\{0\}$, $$\AC_F(u,v) = -W_{\widetilde{f}_v}(u). $$
\end{Prop}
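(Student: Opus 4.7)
The plan is to combine the Fourier-inversion formula from Proposition~\ref{DLCT_WT} with the plateaued hypothesis to reduce $W_{f_v}(\omega)^2$ to a multiple of an indicator function, which is essentially the dual $\widetilde{f}_v$. First, I would write
\[
\AC_F(u,v) \;=\; \frac{1}{2^n} \sum_{\omega \in \gf_2^n} (-1)^{u\cdot \omega} W_{f_v}(\omega)^2,
\]
which is just the specialization of~(\ref{DW}) to the $v$-component. Because $F$ is plateaued, $W_{f_v}(\omega) \in \{0,\pm 2^{r_v}\}$, so $W_{f_v}(\omega)^2 = 2^{2r_v}$ exactly when $\widetilde{f}_v(\omega)=1$ and vanishes otherwise. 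Hence
\[
W_{f_v}(\omega)^2 = 2^{2r_v}\,\widetilde{f}_v(\omega).
\]

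Second, I would convert the $\{0,1\}$-valued function $\widetilde{f}_v$ into $\pm 1$ form by using the standard identity $\widetilde{f}_v(\omega) = \tfrac{1}{2}\bigl(1 - (-1)^{\widetilde{f}_v(\omega)}\bigr)$. Substituting, the autocorrelation becomes
\[
\AC_F(u,v) \;=\; \frac{2^{2r_v}}{2^{n+1}} \sum_{\omega \in \gf_2^n}(-1)^{u\cdot \omega} \;-\; \frac{2^{2r_v}}{2^{n+1}} \sum_{\omega \in \gf_2^n}(-1)^{u\cdot \omega + \widetilde{f}_v(\omega)}.
\]
For $u\neq 0$ the first sum vanishes by character orthogonality, while the second equals $W_{\widetilde{f}_v}(u)$. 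Rearranging gives exactly
\[
\AC_F(u,v) \;=\; -\,2^{2r_v - n - 1} W_{\widetilde{f}_v}(u),
\]
as claimed. The AB specialization is then immediate: if $F$ is AB on $\gf_2^n$ then $r_v = (n+1)/2$ for every nonzero $v$, so $2r_v - n - 1 = 0$ and the prefactor collapses to $-1$.

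I do not expect a serious obstacle here; the argument is essentially a one-line Fourier manipulation once the plateaued property is exploited to linearize the square $W_{f_v}^2$. The only subtlety worth flagging is the case $u=0$, where $\sum_\omega (-1)^{u\cdot\omega}$ no longer vanishes and one instead recovers the trivial identity $\AC_F(0,v)=2^n$ (consistent with Parseval applied to $W_{f_v}^2$); this boundary case is outside the scope of the autocorrelation spectrum considered in Definition~\ref{AC_Def} and can be dismissed with one sentence.
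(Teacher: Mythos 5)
Your proof is correct and follows essentially the same route as the paper: specialize Eq.~(\ref{DW}), use the plateaued property to write $W_{f_v}(\omega)^2 = 2^{2r_v}\widetilde{f}_v(\omega)$, and convert the $\{0,1\}$-valued indicator to $\pm1$ form to recover $W_{\widetilde{f}_v}(u)$. Your explicit remark that the first sum only vanishes for $u\neq 0$ is in fact a point the paper's own proof glosses over, so keep it.
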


\begin{proof}
	According to Eq. (\ref{DW}), we have 
	\begin{eqnarray*}
		\AC_F(u,v) &=& \frac{1}{2^{n}}\sum_{\omega\in\gf_2^n}(-1)^{u\cdot \omega} W_F(\omega,v)^2 \\
		&=& 2^{2r_v-n}\sum_{\omega\in\gf_2^n}(-1)^{u\cdot \omega} \widetilde{f}_v(\omega)\\
		&=& 2^{2r_v-n}\sum_{\omega\in\gf_2^n}\left( \frac{1}{2}\left(1-(-1)^{\widetilde{f}_v(\omega)}\right)  \right)(-1)^{u\cdot \omega}\\
		&=& -2^{2r_v-n-1}\sum_{\omega\in\gf_2^n}(-1)^{\widetilde{f}_v(\omega)+u\cdot \omega}\\
		&=& -2^{2r_v-n-1}W_{\widetilde{f}_v}(u).
	\end{eqnarray*}
	Particularly, when $F$ is an AB function, i.e., $r_v=\frac{n+1}{2}$ for any $v\in\gf_2^m\backslash\{0\}$, it is clear that $\AC_F(u,v) = -W_{\widetilde{f}_v}(u).$
\end{proof}

Similar to the AB functions, the autocorrelation of APN functions can also be expressed in terms of the Walsh transforms of some balanced Boolean functions.

\begin{Prop}
	\label{Prop_APN}
	Let $F$ be an APN function from $\gf_{2}^n$ to itself. For any nonzero $u\in \gf_{2}^n$, we define the Boolean function
	
	\begin{equation}
	\label{fu}
	\gamma_u(x) = 	\left\{
	\begin{array}{ll}
	1,  &~  \text{if}~ x \in \mathsf{Im}(D_uF), \\
	0, &~ \text{if}~ x\in\gf_{2}^n \backslash \mathsf{Im}(D_uF).
	\end{array}
	\right.
	\end{equation} Then the autocorrelation of $F$ can be expressed by the Walsh transform of $\gamma_u$ as
	$$\AC_F(u,v) = -W_{\gamma_u}(v).$$
\end{Prop}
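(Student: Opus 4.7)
The plan is to derive the identity directly from the DDT expression of the autocorrelation established in Proposition \ref{DLCT_DDT}. Starting from
\[\AC_F(u,v) = \sum_{\omega \in \gf_2^n} (-1)^{v\cdot \omega}\, \mathtt{DDT}_F(u,\omega),\]
I would exploit the fact that $F$ is APN: for every nonzero $u$, each entry $\mathtt{DDT}_F(u,\omega)$ belongs to $\{0,2\}$, and equals $2$ precisely when $\omega \in \mathsf{Im}(D_uF)$. Hence $\mathtt{DDT}_F(u,\omega) = 2\gamma_u(\omega)$, which substituted into the formula above gives
\[\AC_F(u,v) = 2 \sum_{\omega \in \gf_2^n} (-1)^{v\cdot \omega}\, \gamma_u(\omega).\]

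Next, I would rewrite the Walsh transform of $\gamma_u$ using the identity $(-1)^{\gamma_u(\omega)} = 1 - 2\gamma_u(\omega)$:
\[W_{\gamma_u}(v) = \sum_{\omega \in \gf_2^n} (-1)^{\gamma_u(\omega)+v\cdot \omega} = \sum_{\omega \in \gf_2^n} (-1)^{v\cdot \omega} - 2\sum_{\omega \in \gf_2^n} (-1)^{v\cdot \omega}\, \gamma_u(\omega).\]
For nonzero $v$ the first sum vanishes, giving $W_{\gamma_u}(v) = -\AC_F(u,v)$, which is the claimed relation. (The statement is implicitly for $v \neq 0$, as is standard for autocorrelation spectra; note moreover that $\gamma_u$ is balanced since the APN property forces $\# \mathsf{Im}(D_uF) = 2^{n-1}$, and consistently $W_{\gamma_u}(0)=0=\AC_F(u,v)-2^n$ would not match on the zero row, but this case is excluded.)

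There is essentially no substantive obstacle: the result follows by a two-line computation once the DDT-autocorrelation duality and the two-valuedness of the APN derivative are invoked. The only delicate point to state carefully in the write-up is the balancedness of $\gamma_u$, ensuring that $\gamma_u$ genuinely is a balanced Boolean function in the sense claimed in the surrounding text, and the implicit restriction to nonzero $v$ that makes the rewriting of $W_{\gamma_u}(v)$ collapse to $-2\sum_\omega (-1)^{v\cdot\omega}\gamma_u(\omega)$.
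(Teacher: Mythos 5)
Your proof is correct and follows essentially the same route as the paper: both arguments reduce to the fact that the APN property makes each nonzero DDT row two-valued (equivalently, $D_uF$ is $2$-to-$1$), so that $\AC_F(u,v)=2\sum_{\omega}(-1)^{v\cdot\omega}\gamma_u(\omega)$, and both need $v\neq 0$ so that the constant term in $W_{\gamma_u}(v)$ vanishes. The paper computes directly from the definition of $\AC_F$ rather than invoking Proposition~\ref{DLCT_DDT}, but this is only a cosmetic difference; your explicit remark about the excluded case $v=0$ and the balancedness of $\gamma_u$ is a welcome clarification.
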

\begin{proof} Since the APN function $F$ has a $2$-to-$1$ derivative function $D_uF(x)$ at any nonzero $u$,  we know that $\mathsf{Im}(D_uF)$ has cardinality $2^{n-1}$. Then,
	\begin{eqnarray*}
		\AC_F(u,v) &=& \sum_{x\in\gf_{2}^n}(-1)^{v\cdot (F(x+u)+F(x))} \\
		&=& 2\sum_{y\in \mathsf{Im}(D_uF)} (-1)^{v \cdot y}\\
		&=& \sum_{y\in \mathsf{Im}(D_uF)}(-1)^{v \cdot y} - \sum_{y\in \gf_{2}^n \backslash \mathsf{Im}(D_uF)}(-1)^{v \cdot y}\\
		&=& -\sum_{y\in\gf_{2}^n}(-1)^{\gamma_u(y)+v \cdot y}\\
		&=& -W_{\gamma_u}(v).
	\end{eqnarray*}
	\end{proof}
\mkq{
%
%
%
%
}

{From Proposition~\ref{Prop_APN}, we see that the autocorrelation of any APN function corresponds to the Walsh transform of the Boolean function $\gamma_u$ in Eq. (\ref{fu}), which is balanced. We then immediately deduce the following Corollary.

\begin{Cor}[Lowest possible absolute indicator for APN functions]\label{prop:APN}
    Let $n$ be a positive integer.
    If there exists an APN function from $\gf_2^n$ to $\gf_2^n$ with absolute indicator~$\Delta$, then there exists a balanced Boolean function of $n$~variables with linearity~$\Delta$.
\end{Cor}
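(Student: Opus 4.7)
The plan is to read off the statement as an essentially immediate consequence of Proposition~\ref{Prop_APN}, using two observations: the indicator function $\gamma_u$ defined in Equation~(\ref{fu}) is balanced for any nonzero $u$, and the quantity $\Delta_F$ is realized at some particular nonzero $u^{\ast}$, so that a single choice of $\gamma_{u^{\ast}}$ already exhibits the desired Boolean function.

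First I would fix a pair $(u^{\ast}, v^{\ast}) \in (\gf_2^n\setminus\{0\})\times(\gf_2^n\setminus\{0\})$ at which the absolute indicator is attained, i.e.\ $|\AC_F(u^{\ast}, v^{\ast})| = \Delta$. I then set $f := \gamma_{u^{\ast}}$ where $\gamma_{u^{\ast}}$ is the indicator function of $\image(D_{u^{\ast}} F)$ from Equation~(\ref{fu}). Since $F$ is APN, the derivative $D_{u^{\ast}} F$ is $2$-to-$1$, so $\#\image(D_{u^{\ast}} F) = 2^{n-1}$, which means $f$ takes the value~$1$ exactly $2^{n-1}$ times and is therefore balanced. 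In particular, $W_f(0) = 0$.

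Next I would invoke Proposition~\ref{Prop_APN} to obtain $\AC_F(u^{\ast}, v) = -W_f(v)$ for every nonzero $v \in \gf_2^n$. Taking absolute values and maximizing over nonzero~$v$ gives
\[
\max_{v \in \gf_2^n \setminus\{0\}} |W_f(v)| \;=\; \max_{v \in \gf_2^n \setminus\{0\}} |\AC_F(u^{\ast}, v)| \;=\; \Delta,
\]
where the last equality uses the choice of $u^{\ast}$. Because $W_f(0) = 0 \leq \Delta$, the same maximum is attained when $v = 0$ is included, so $\mathtt{L}(f) = \max_{v \in \gf_2^n}|W_f(v)| = \Delta$, which is precisely what the corollary asserts.

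There is no real obstacle: Proposition~\ref{Prop_APN} does all the analytic work, and the only points to verify are that $\gamma_{u^{\ast}}$ is balanced (which follows from the APN hypothesis via the cardinality of $\image(D_{u^{\ast}}F)$) and that extending the maximum to $v=0$ does not enlarge it (which follows from balancedness). The only subtlety worth recording explicitly is that Proposition~\ref{Prop_APN}'s identity is used only at nonzero~$v$, matching the range of $v$ in the definition of $\Delta_F$.
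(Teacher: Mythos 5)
Your proof is correct and follows exactly the route the paper intends: it deduces the corollary directly from Proposition~\ref{Prop_APN} together with the balancedness of $\gamma_{u}$ (the paper states this deduction as immediate without writing out the details). Your explicit handling of the choice of $u^{\ast}$, the restriction of the identity to nonzero~$v$, and the harmless inclusion of $v=0$ in the linearity maximum are all accurate refinements of the same argument.
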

  
To our best knowledge, the smallest known linearity for a balanced function is obtained by Dobbertin's recursive construction \cite{FSE:Dobbertin94}. For instance, for $n=9$, the smallest possible linearity for a balanced Boolean function is known to belong to the set $\{ 24, 28, 32 \}$, which implies that exhibiting an APN function over $\gf_{2}^9$ with absolute indicator $24$ would determine the smallest linearity for such a function.

One of the functions whose absolute indicator is known is the inverse
	mapping $F(x)=x^{2^n-2}$ over $\gf_{2^n}$.
	
	\begin{Prop} \label{Prop-Inv}\cite{Cha07}
	The autocorrelation spectrum of the inverse function  $F(x)=x^{2^n-2}$ over $\gf_{2^n}$ is given by
	$$\Lambda_F =  \left\{K\left(v \right) - 1 + 2 \times (-1)^{\tr_{2^n}(v)}: v\in \gf_{2^n}^{*} \right\},$$ where   
	$ K(a) = \sum_{x\in\gf_{2^n}^{*}}(-1)^{\tr_{2^n}\left(\frac{1}{x}+ax\right)} $ is the Kloosterman sum over $\gf_{2^n}$.
	Furthermore, the absolute indicator of the inverse function is given by:
	\begin{enumerate}[i)]
		\item when $n$ is even, $\Delta_{F}=2^{\frac{n}{2}+1}$;
		\item when $n$ is odd, $\Delta_{F} = \mathtt{L}(F)$ if $\mathtt{L}(F) \equiv 0 \pmod 8$, and $\Delta_{F} = \mathtt{L}(F)\pm 4$ otherwise.
	\end{enumerate}
	\end{Prop}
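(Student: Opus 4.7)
The plan is to first derive the formula for $\AC_F(u,v)$ by a direct computation, and then deduce the absolute indicator in each parity case using the divisibility from Proposition~\ref{prop-divisibility} together with known results on Kloosterman sums.

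\emph{Derivation of the spectrum.} The substitution $x \mapsto ux$ yields $\AC_F(u,v) = \AC_F(1, v/u)$ for nonzero $u$, so it suffices to compute $\AC_F(1,v) = \sum_x (-1)^{\tr_{2^n}(v/x + v/(x+1))}$ (with convention $0^{-1}=0$). Separating the contribution from $x \in \{0,1\}$, which equals $2(-1)^{\tr_{2^n}(v)}$, and applying the identity $1/x + 1/(x+1) = 1/(x^2+x)$ on the remaining sum, together with the fact that $x \mapsto x^2+x$ is $2$-to-$1$ from $\gf_{2^n}\setminus\{0,1\}$ onto $\{y \in \gf_{2^n}^* : \tr_{2^n}(y)=0\}$, the substitutions $y = x^2+x$ and then $z = 1/y$ transform the sum into $2\sum_{z \in \gf_{2^n}^*,\,\tr_{2^n}(1/z)=0}(-1)^{\tr_{2^n}(vz)}$. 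Expanding the trace-zero indicator as $(1 + (-1)^{\tr_{2^n}(1/z)})/2$ then splits this into the Kloosterman sum $K(v)$ and the character sum $\sum_{z \in \gf_{2^n}^*}(-1)^{\tr_{2^n}(vz)} = -1$, yielding $\AC_F(1,v) = K(v) - 1 + 2(-1)^{\tr_{2^n}(v)}$ as claimed.

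\emph{Absolute indicator.} Writing $c = v/u$, we have $\AC_F(u,v) = K(c)+1$ if $\tr_{2^n}(c)=0$ and $\AC_F(u,v) = K(c)-3$ if $\tr_{2^n}(c)=1$. Since $F$ has algebraic degree $n-1$ and is a permutation, Proposition~\ref{prop-divisibility} ensures $8 \mid \AC_F(u,v)$; combined with the general fact $K(c) \equiv -1 \pmod 4$, this yields the useful congruence $\tr_{2^n}(c) = 0 \Leftrightarrow K(c) \equiv -1 \pmod 8$ (a Helleseth--Zinoviev-type property, reobtained here for free). For $n$ even, the Weil bound $|K(c)| \leq 2^{n/2+1}$ combined with divisibility by $8$ gives $\Delta_F \leq 2^{n/2+1}$; for the matching lower bound, Lachaud--Wolfmann's theorem furnishes $c$ with $K(c) = 2^{n/2+1}-1$, which by the congruence satisfies $\tr_{2^n}(c)=0$, so $\AC_F(u,v) = 2^{n/2+1}$. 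For $n$ odd, $F$ is AB, hence $W_F(\alpha,\beta) = K(\alpha\beta)+1 \in \{0, \pm \mathtt{L}(F)\}$ for $\alpha,\beta \neq 0$ with $\mathtt{L}(F) = 2^{(n+1)/2}$, so $K(c) \in \{-1,\, \pm \mathtt{L}(F)-1\}$ and $|\AC_F(u,v)| \in \{0, 4, \mathtt{L}(F), \mathtt{L}(F)-4, \mathtt{L}(F)+4\}$. Intersecting with the multiples of $8$ gives $\Delta_F = \mathtt{L}(F)$ when $\mathtt{L}(F) \equiv 0 \pmod 8$ and $\Delta_F \in \{\mathtt{L}(F)-4, \mathtt{L}(F)+4\}$ otherwise; in either case Theorem~\ref{Th_Bound} eliminates the trivial value~$0$.

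The main obstacle is the lower bound in the even case: Theorem~\ref{Th_Bound} alone only yields $\Delta_F > 2^{n/2}$, which for $n \geq 8$ falls short of the target $2^{n/2+1}$, so divisibility does not pin down the value by itself. The key external input is Lachaud--Wolfmann's theorem, needed to guarantee that the extremal Kloosterman value $2^{n/2+1}-1$ is actually attained; the divisibility-induced mod-$8$ congruence then forces the correct value of $\tr_{2^n}(c)$ at such an extremal $c$ and delivers the matching lower bound.
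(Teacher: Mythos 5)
The paper itself gives no proof of this proposition --- it is quoted from \cite{Cha07} --- so your argument stands on its own. Your derivation of the spectrum is correct: the reduction to $\AC_F(1,v)$, the separate treatment of $x\in\{0,1\}$, the identity $x^{-1}+(x+1)^{-1}=(x^2+x)^{-1}$, the $2$-to-$1$ parametrisation of the trace-zero elements and the indicator expansion all check out and give $\AC_F(1,v)=K(v)-1+2(-1)^{\tr_{2^n}(v)}$. The even case is also sound: divisibility by $8$ from Proposition~\ref{prop-divisibility}, the Weil bound, and the Lachaud--Wolfmann attainability of $K(c)=2^{n/2+1}-1$ together pin down $\Delta_F=2^{n/2+1}$; and your observation that the divisibility by $8$ re-derives the Helleseth--Zinoviev congruence $\tr_{2^n}(c)=0\Leftrightarrow K(c)\equiv-1\pmod 8$ is a nice, non-circular touch.

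The odd case, however, rests on a false premise: for odd $n\ge 5$ the inverse function is APN but \emph{not} AB, its Walsh spectrum is not three-valued, and its linearity is not $2^{(n+1)/2}$. By Lachaud--Wolfmann the values $W_F(a,b)=K(ab)+1$ (for $ab\neq 0$) run through \emph{all} multiples of $4$ in $[-2^{n/2+1}+1,\,2^{n/2+1}+1]$; for $n=7$ this gives $\mathtt{L}(F)=20$ and (as the paper's own example for $x^{-1}$ over $\gf_{2^7}$ records) $\Delta_F=24$, whereas your argument would output $\Delta_F=2^{4}=16$ since $16\equiv 0\pmod 8$ --- a contradiction. The $\mathtt{L}(F)$ in the statement is the genuine, Kloosterman-determined linearity $\mathtt{L}(F)=\max_{c\neq 0}|K(c)+1|$, not $2^{(n+1)/2}$. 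The repair is short and uses only ingredients you already have: from $W_F(a,b)=K(ab)+1$ one gets $\mathtt{L}(F)=\max_{c\neq 0}|K(c)+1|$; your mod-$8$ congruence splits the values $K(c)+1$ into the class $\equiv 0\pmod 8$, where $\AC_F(1,c)=K(c)+1$, and the class $\equiv 4\pmod 8$, where $\AC_F(1,c)=K(c)+1-4$. If $\mathtt{L}(F)\equiv 0\pmod 8$ the extremal value $\pm\mathtt{L}(F)$ lies in the first class and every value in the second class has modulus at most $\mathtt{L}(F)-4$, so $\Delta_F=\mathtt{L}(F)$; otherwise the extremal value lies in the second class and contributes $|{\pm\mathtt{L}(F)}-4|\in\{\mathtt{L}(F)-4,\mathtt{L}(F)+4\}$ while the first class contributes at most $\mathtt{L}(F)-4$, giving $\Delta_F=\mathtt{L}(F)\pm 4$.
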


 When \(n\) is odd, the inverse mapping is APN. Then, from Proposition \ref{Prop_APN}, its autocorrelation table is directly determined by the corresponding $\gamma$.
This explains why the absolute indicator of the inverse mapping when \(n\) is odd, is derived from its linearity as detailed in the following example.
\begin{example}[ACT of the inverse mapping, $n$ odd]
    For any $u \in \gf_{2^n}^*$, the Boolean function $\gamma_u$, which characterizes the support of Row~$u$ in the DDT of the inverse mapping $F:x \mapsto x^{-1}$, coincides with $(1+F_{u^{-1}})$ except on two points:
    \begin{equation*}
        \gamma_u(x) = \begin{cases*}
            1+\tr(u^{-1}x^{-1}) & if $x \not \in \{0, u^{-1}\}$\\
            0                   & if $x = 0$\\
            1                   & if $x = u^{-1}$
        \end{cases*}\;.
    \end{equation*}
    This comes from the fact that the equation
    \begin{equation*}
        (x+u)^{-1} + x^{-1} = v
    \end{equation*}
    for $v \neq u^{-1}$ can be rewritten as
    \begin{equation*}
        x + (x+u) = v (x+u)x
    \end{equation*}
    or equivalently when $v \neq 0$, by setting $y=u^{-1}x$,
    \begin{equation*}
        y^2+y = u^{-1}v^{-1}\;.
    \end{equation*}
    It follows that this equation has two solutions if and only if $\tr_{2^n}(u^{-1}v^{-1}) = 0$.
    From the proof of the previous proposition, we deduce
    \begin{align*}
        \AC_{F}(u,v)
        &= -W_{\gamma_u}(v)  \\
        &= W_{F_{u^{-1}}}(v) + 2\left(1 - (-1)^{\tr_{2^n}(u^{-1}v)}\right)\;,
    \end{align*}
    where the additional term corresponds to the value of the sum defining the Walsh transform $W_{F_{u^{-1}}}(v)$ at points~$0$ and $u^{-1}$.
\end{example}

\section{Autocorrelation spectra and absolute indicator of special polynomials}
\label{polynomials}

This section mainly considers some polynomials of special forms. Explicitly, we will investigate the autocorrelation spectra and the absolute indicator of the Gold permutations and their inverses, and of the Bracken-Leander functions. Our study is divided into two subsections.

\subsection{Monomials}

In the subsection, we consider the autocorrelation of some special monomials of cryptographic interest, mainly APN permutations and one permutation with differential uniformity $4$, over the finite field $\gf_{2^n}$. 
Firstly, we present a general observation on the autocorrelation of monomials.

\begin{Prop}
	\label{prop_monomials}
	Let $F(x) = x^d \in \gf_{2^n}[x]$. Then $$\Lambda_F = \left\{ \AC_F(1,v): v\in\gf_{2^n}^{*} \right\}.$$ Moreover, if $\gcd\left(d,2^n-1\right)=1$, then 
	$$\Lambda_F = \left\{ \AC_F(u,1): u \in \gf_{2^n}^{*}  \right\}.$$
\end{Prop}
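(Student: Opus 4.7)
The plan is to reduce the entire proposition to a single scaling identity. Concretely, I aim to prove that for every $u \in \gf_{2^n}^{*}$ and every $v \in \gf_{2^n}^{*}$,
\[
\AC_F(u,v) = \AC_F(1, v u^d)\;.
\]
This identity is the engine that drives both claims, and everything else will be bookkeeping.

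The derivation of the identity is a straight change of variables. Starting from
\[
\AC_F(u,v) = \sum_{x \in \gf_{2^n}} (-1)^{\tr_{2^n}\left(v(x^d + (x+u)^d)\right)}\;,
\]
I substitute $x = u y$, which is a bijection of $\gf_{2^n}$ because $u \neq 0$. Then $x^d + (x+u)^d = u^d\bigl(y^d + (y+1)^d\bigr)$, so that $v(x^d+(x+u)^d) = (v u^d)\bigl(y^d+(y+1)^d\bigr)$, and the sum collapses to $\AC_F(1, v u^d)$. The only care needed here is to keep track of the fact that $v u^d$ is again nonzero whenever $v, u$ are, so the resulting pair $(1, v u^d)$ lies in the index range of $\Lambda_F$.

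For the first claim, I combine the identity with the observation that, for any fixed nonzero $u$, the map $v \mapsto v u^d$ is a bijection of $\gf_{2^n}^{*}$. Hence for each $u \in \gf_{2^n}^{*}$ the multiset $\{\AC_F(u,v) : v \in \gf_{2^n}^{*}\}$ coincides with $\{\AC_F(1,w) : w \in \gf_{2^n}^{*}\}$, and taking the union over $u$ shows that the set of values appearing in $\Lambda_F$ equals $\{\AC_F(1,v) : v \in \gf_{2^n}^{*}\}$, as stated. For the second claim, I specialize the identity to $v = 1$ to get $\AC_F(u,1) = \AC_F(1, u^d)$. Under the hypothesis $\gcd(d, 2^n-1)=1$, the map $u \mapsto u^d$ is a permutation of $\gf_{2^n}^{*}$, so $\{\AC_F(u,1) : u \in \gf_{2^n}^{*}\} = \{\AC_F(1, w) : w \in \gf_{2^n}^{*}\}$, which by the first claim equals $\Lambda_F$.

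There is no real obstacle: the monomial structure of $F$ is exactly what makes the scaling $x \mapsto u y$ convert a derivative in direction $u$ into a scaled derivative in direction $1$, and the coprimality condition in the second part is the minimal requirement to turn the column index $u^d$ into a bijective relabeling of $\gf_{2^n}^{*}$. The only subtlety worth flagging in the write-up is that the proposition is naturally read as an equality of sets of values (not as an equality of multisets with multiplicities), since $\Lambda_F$ has $(2^n-1)^2$ entries while the right-hand sides have only $2^n-1$; each distinct autocorrelation value on the right in fact occurs $2^n-1$ times on the left.
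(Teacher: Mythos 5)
Your proof is correct and follows essentially the same route as the paper's: the same substitution $x = uy$ yielding $\AC_F(u,v) = \AC_F(1, vu^d)$, and the same use of the bijectivity of $u \mapsto u^d$ when $\gcd(d, 2^n-1)=1$ for the second claim. Your remark that the equalities must be read as equalities of value sets rather than of multisets is a sensible clarification that the paper leaves implicit.
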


\begin{proof}
	For any $u,  v\in\gf_{2^n}^{*},$ we have
	\begin{eqnarray*}
		\AC_F(u,v) &=& \sum_{x\in\gf_{2^n}}(-1)^{\tr_{2^n}\left( v(F(x)+F(x+u))\right)  }\\
		&=& \sum_{x\in\gf_2^n}(-1)^{ \tr_{2^n}\left( v\left(  x^d+(x+u)^d \right) \right) }\\
		&=& \sum_{x\in\gf_{2^n}}(-1)^{\tr_{2^n}\left(vu^d\left( \left(\frac{x}{u}\right)^d +   \left(\frac{x}{u}+1\right)^d \right)   \right)}\\
		&=& \AC_F\left(1, vu^d \right).
	\end{eqnarray*}
	Moreover, if $\gcd\left( d,2^n-1 \right)=1$, then for any $v\in\gf_{2^n}^{*}$, there exists a unique element $u\in\gf_{2^n}^{*}$ such that $v=u^d$. Furthermore, 
	\begin{eqnarray*}
		\AC_F(1,v) &=& \sum_{x\in\gf_{2^n}}(-1)^{\tr_{2^n}\left(v\left(x^d+(x+1)^d\right)\right)} \\
		&=& \sum_{x\in\gf_{2^n}}(-1)^{\tr_{2^n}\left((ux)^d+(ux+u)^d\right)}\\
		&=& \sum_{y\in\gf_{2^n}}(-1)^{\tr_{2^n}\left(y^d+(y+u)^d\right)}\\
		&=& \AC_F(u,1).
	\end{eqnarray*}
	\end{proof}

Proposition \ref{prop_monomials} implies that it suffices to focus on the autocorrelation
	of the single component function $\tr_{2^n}\left(x^d\right)$ in the study of the autocorrelation table of the monomial $x^d$ with 
$\gcd\left(d,2^n-1\right)=1$. 	



We next discuss the autocorrelation of some cubic monomials. 
From Proposition \ref{prop-cubic}, if $n=m$ is odd, we obviously have that $\Delta_F \geq 2^{\frac{n+1}{2}}$. Furthermore, the equality is achieved when $ \dim(\left\{w\in \gf_2^n ~|~ D_{u}D_{w}f_{v} = c\right\})=1$ for all nonzero $u$ and $v$. Additionally, an upper bound on the absolute indicator can be established for two cubic APN permutations, namely the Kasami power function and the Welch function.
We denote the Kasami power functions $K_i$ and the Welch power function $W$ by
\[\begin{array}{rccl}
K_i : & \gf_{2^n} &\to & \gf_{2^n} \\
& x &\mapsto & x^{2^{2i}-2^i+1}
\end{array}
\qquad \text{and} \qquad
\begin{array}{rccl}
W : & \gf_{2^n} &\to & \gf_{2^n} \\
& x &\mapsto & x^{2^{(n-1)/2}+3}\;.
\end{array}\]

\begin{Prop}\cite[Lemma 1]{TIT:Carlet08}
	The absolute indicator for $W$ on $\gf_{2^n}$ is upper bounded by
	\begin{equation*}
	\Delta_W \leq 2^{\frac{n+5}{2}}.
	\end{equation*}
\end{Prop}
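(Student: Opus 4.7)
The plan is to exploit the cubic structure of the Welch function via Proposition \ref{prop-cubic}. With $k=(n-1)/2$, the Welch exponent $2^k+3=2^k+2+1$ has 2-adic weight~$3$, so $W$ has algebraic degree~$3$. Proposition \ref{prop-cubic} then asserts that $|\AC_W(u,v)|^2\in\{0,2^{n+d(u,v)}\}$ for every nonzero $u,v\in\gf_{2^n}$, where $d(u,v)$ denotes the $\gf_2$-dimension of $\{w\in\gf_{2^n}: D_u D_w f_v \text{ is constant}\}$. Thus, it suffices to prove the uniform bound $d(u,v)\le 5$.

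To this end I would first compute $D_uD_wW(x)$ explicitly. Setting $s=2^k$ and expanding $(x+a)^{s+3}=(x^s+a^s)(x+a)^3$, collecting by the power of~$x$ yields
\[D_u D_w W(x)=(uw^2+u^2w)\,x^s+(uw^s+u^s w)\,x^2+(u^2w^s+u^s w^2)\,x+C(u,w),\]
with $C(u,w)$ independent of~$x$. Applying $\tr_{2^n}(v\,\cdot\,)$, using the adjoint identity $\tr_{2^n}(yx^{2^i})=\tr_{2^n}(y^{2^{n-i}}x)$, and, for convenience, raising the resulting equation to the power~$2^k$, one sees that $D_uD_wf_v$ is constant in~$x$ if and only if a certain linearized polynomial $P_{u,v}(w)$ vanishes. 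This $P_{u,v}$ is a sum of six monomials in~$w$ with exponents in $\{2^0,2^1,2^{k-1},2^{k+1},2^{n-2},2^{n-1}\}$, whose coefficients $C_0=vu^2$, $C_1=vu$, $C_{k-1}=v^{2^{k-1}}u^{2^{n-2}}$, $C_{k+1}$, $C_{n-2}$, $C_{n-1}$ satisfy the three conjugation identities
\[C_{n-1}=C_0^{2^k},\quad C_{n-2}=C_1^{2^{k-1}},\quad C_{k+1}=C_{k-1}^{2}.\]

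The crux of the proof, and where the main difficulty lies, is to show $\dim_{\gf_2}\ker P_{u,v}\le 5$: the linearized degree alone only yields the weak bound $n-1$. The conjugation identities permit a splitting $P_{u,v}=Q+R+S$, where each of $Q,R,S$ is a binomial linearized polynomial pairing an exponent with its Frobenius-conjugate partner. Each binomial equation $Q(w)=0$, $R(w)=0$, $S(w)=0$ can be reduced, via a suitable Frobenius substitution $w\mapsto y^{2^j}$, to a Kummer-type equation of the shape $y\,(y^{t}+\alpha)=0$ in~$\gf_{2^n}$, whose nonzero solutions are in bijection with the $\gcd(t,2^n-1)$-th roots of a fixed element. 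Since $n$ is odd, this yields $\dim\ker Q=1$ and $\dim\ker S=1$ (using $\gcd(3,2^n-1)=1$), while $\dim\ker R\le 3$, with equality precisely when $3\mid n$ so that $\gcd(7,2^n-1)=7$. A careful linear-algebraic combination of these three partial kernel estimates---exploiting the constraint that $w\in\ker P_{u,v}$ forces $(Q(w),R(w),S(w))$ to lie on the hyperplane $\{a+b+c=0\}$ in $(\gf_{2^n})^3$---finally delivers $\dim\ker P_{u,v}\le 5$. Combined with Proposition \ref{prop-cubic}, this gives $|\AC_W(u,v)|\le 2^{(n+5)/2}$, completing the proof.
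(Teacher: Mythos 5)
The paper does not prove this proposition; it imports the bound from \cite{TIT:Carlet08}, so your argument has to stand on its own. Your overall strategy is the right one, and the concrete computations you report all check out: the Welch exponent $2^k+3$ has $2$-weight $3$, Proposition~\ref{prop-cubic} reduces the claim to the uniform bound $d(u,v)\le 5$, your expression for $D_uD_wW(x)$ is correct, and so are the exponent set $\{2^0,2^1,2^{k-1},2^{k+1},2^{n-2},2^{n-1}\}$ of $P_{u,v}$ after raising to the power $2^k$, the three conjugation identities among the coefficients, and the kernel dimensions $1$, $\le 3$ (with $3$ exactly when $3\mid n$), and $1$ of the three binomials $Q$, $R$, $S$.

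The step you yourself flag as the crux, however, is not an argument. There is no inequality $\dim\ker(Q+R+S)\le\dim\ker Q+\dim\ker R+\dim\ker S$ for $\gf_2$-linear maps, and the observation that $w\in\ker P_{u,v}$ forces $(Q(w),R(w),S(w))$ into the subspace $\{a+b+c=0\}$ carries no information: that subspace has $\gf_2$-codimension $n$ inside $(\gf_{2^n})^3$, so its preimage under a linear map defined on an $n$-dimensional source is not constrained at all and can be the whole of $\gf_{2^n}$. A counterexample built from binomials of exactly the same shape: $Q(w)=w$, $R(w)=w^2$, $S(w)=w+w^2$ have kernels of dimensions $0$, $0$, $1$, yet $Q+R+S$ is identically zero. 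So the arithmetic $1+1+3=5$ matching the target is a coincidence, not a proof. What actually closes the gap is an elimination: add to $P_{u,v}(w)=0$ a suitable Frobenius twist $P_{u,v}(w)^{2^j}=0$ so that several of the six monomials cancel, leaving a linearized consequence of small $q$-degree span whose root space contains $\ker P_{u,v}$ and is bounded directly through $\gcd$ computations (for instance, for $u=v=1$ this reduces to $y+y^{2^2}=0$ with $y$ a binomial image of $w$, giving dimension at most $4$). Until that elimination is carried out for arbitrary $(u,v)$, the bound $d(u,v)\le 5$, and hence the proposition, is not established.
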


As long as the (regular) degree of the derivatives is small compared to the field size, the Weil bound gives a nontrivial upper bound for the absolute indicator of a vectorial Boolean function.
This is particularly interesting for the Kasami functions as the Kasami exponents do not depend on the field size (contrary to for example the Welch exponent).

\begin{Prop}
	The absolute indicator of $K_i$ on $\gf_{2^n}$ is upper bounded by
	\begin{equation*}
	\Delta_{K_i} \leq (4^i-2^{i+1})\times 2^{\frac{n}{2}}.
	\end{equation*}
	In particular,
	\begin{equation*}
	\Delta_{K_2} \leq 2^{\frac{n+5}{2}}.
	\end{equation*}
\end{Prop}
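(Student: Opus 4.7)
The plan is to apply the Weil--Carlitz--Uchiyama bound on Artin--Schreier character sums to the autocorrelation of $K_i$, after expressing the relevant polynomial in ``reduced'' form (using only odd exponents). By Proposition~\ref{prop_monomials} (or equivalently a direct change of variable $x \mapsto u x$), it suffices to bound $|\AC_{K_i}(1,v)|$ for nonzero $v \in \gf_{2^n}$; explicitly,
\[
\AC_{K_i}(1,v) = \sum_{x \in \gf_{2^n}} (-1)^{\tr_{2^n}\!\left(v\,\bigl((x+1)^d + x^d\bigr)\right)}, \qquad d = 2^{2i}-2^i+1.
\]

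The binary expansion of $d$ has exactly $i+1$ nonzero bits, located at positions $\{0\} \cup \{i, i+1, \ldots, 2i-1\}$. By Lucas' theorem, in characteristic~$2$,
\[
(x+1)^d + x^d = \sum_{\substack{0 < k < d \\ k \preceq d}} x^k,
\]
where $k \preceq d$ means that every bit of $k$ is a bit of $d$. Next, I would use the identity $\tr_{2^n}(a\,x^{2^r k'}) = \tr_{2^n}(a^{2^{-r}} x^{k'})$ (valid because $\tr_{2^n}(y^2)=\tr_{2^n}(y)$) to rewrite the argument of the trace as $\tr_{2^n}(\phi(x))$ for some $\phi \in \gf_{2^n}[x]$ involving only \emph{odd} exponents. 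The key claim is that the maximum odd exponent appearing in $\phi$ is exactly
\[
k^{\ast} = 1 + 2^{i+1} + 2^{i+2} + \cdots + 2^{2i-1} = 2^{2i}-2^{i+1}+1,
\]
achieved by the choice $k = k^{\ast}$ (removing the bit at position~$i$ from $d$). A bit-by-bit subset analysis shows that for $r \geq 1$ the exponent $2^r k^{\ast}$ is not $\preceq d$ (since bit $2^{2i-1+r}$ of $2^r k^{\ast}$ exceeds~$d$), so no other term contributes to $x^{k^{\ast}}$; hence its coefficient in $\phi$ equals $v \neq 0$. A parallel analysis shows that every \emph{even} $k \preceq d$ reduces to an odd exponent at most $2^i - 1 < k^{\ast}$.

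Since $\phi$ has odd degree $k^{\ast} \geq 3$ (valid for $i \geq 2$), it cannot be written as $\psi^2 + \psi + c$, so the Weil--Carlitz--Uchiyama bound applies and gives
\[
|\AC_{K_i}(1,v)| = \left| \sum_{x \in \gf_{2^n}} (-1)^{\tr_{2^n}(\phi(x))} \right| \leq (k^{\ast}-1) \cdot 2^{n/2} = (4^i - 2^{i+1}) \cdot 2^{n/2},
\]
which is the desired general bound. The special case $\Delta_{K_2} \leq 2^{(n+5)/2}$ follows by specializing to $i=2$ (where $4^2 - 2^3 = 8$) and then using the divisibility of $\AC_{K_2}(u,v)$ by powers of~$2$ guaranteed by Proposition~\ref{prop-divisibility} (since $K_2$ is cubic) to integralize the bound. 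The main obstacle is the bookkeeping that pins down the reduced degree exactly: one must verify both that the coefficient of $x^{k^{\ast}}$ in $\phi$ does not vanish under Frobenius collisions and that no reduction of an even exponent yields an odd power larger than $k^{\ast}$.
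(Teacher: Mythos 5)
Your derivation of the general bound is correct and follows the same route as the paper: expand $(x+1)^d+x^d$ via Lucas, reduce even exponents through $\tr_{2^n}(y^2)=\tr_{2^n}(y)$, identify $4^i-2^{i+1}+1$ as the reduced (odd) degree, and apply the Weil--Carlitz--Uchiyama bound. The paper states this in two lines (``the two exponents with the highest degree of any derivative of $K_i$ are $4^i-2^i$ and $4^i-2^{i+1}+1$; the first is even, so it can be reduced''); your bookkeeping --- that every even $k\preceq d$ has odd part at most $2^i-1$, and that no Frobenius collision can kill the coefficient of $x^{k^{\ast}}$ --- is exactly the verification the paper leaves implicit, so this part is fine.

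The gap is in your last step, the refinement from $(4^2-2^3)\cdot 2^{n/2}=2^{(n+6)/2}$ down to $2^{(n+5)/2}$ for $K_2$. Proposition~\ref{prop-divisibility} is not strong enough for this: for a cubic function on $\gf_{2}^n$ with $n$ odd it only guarantees divisibility by $2^{\lceil (n-1)/2\rceil+1}=2^{(n+1)/2}$, and a multiple of $2^{(n+1)/2}$ bounded by $2^{(n+6)/2}=2^{5/2}\cdot 2^{(n+1)/2}\approx 5.66\cdot 2^{(n+1)/2}$ could still equal $5\cdot 2^{(n+1)/2}$ or $3\cdot 2^{(n+1)/2}$, the first of which exceeds $2^{(n+5)/2}=4\cdot 2^{(n+1)/2}$. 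What the paper uses instead is Proposition~\ref{prop-cubic}: since $K_2$ is cubic, every nonzero value $|\AC_{K_2}(u,v)|$ is an exact power of two of the form $2^{(n+d(u,v))/2}$ with $d(u,v)\equiv n \pmod 2$; combined with the Weil bound $2^{(n+d)/2}\le 2^{(n+6)/2}$ this forces $d\le 5$ when $n$ is odd, hence $\Delta_{K_2}\le 2^{(n+5)/2}$. Replace your appeal to divisibility by this ``the values are powers of two'' argument and the step closes. (As a side remark, this parity argument only rules out $d=6$ when $n$ is odd; for even $n$ the stated $K_2$ bound does not follow from the Weil estimate alone, which is an issue with the paper's own proof as much as with yours.)
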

\begin{proof}
	Note that the two exponents with the highest degree of any derivative of $K_i$ are $4^i-2^i$ and $4^i-2^{i+1}+1$.
	The first exponent is even, so it can be reduced using the relation $\tr_{2^n}(y^2)=\tr_{2^n}(y)$.
	The result then follows from the Weil bound.
	Combining the bound with Proposition \ref{prop-cubic} yields the bound on $K_2$.
\end{proof}
%

Some other results on the autocorrelations of cubic Boolean functions $\tr_{2^n}(x^d)$ are known in the literature, which can be trivially extended to the vectorial functions $x^d$ if $\gcd(d,n)=1$, see~\cite[Theorem 5]{SAC:GonKho03}, \cite{TIT:Carlet08} and \cite[Lemmas 2 and 3]{IS:SunWu09}.
In the case $n=6r$ and $d=2^{2r}+2^r+1$, the power monomial $x^d$ is not a permutation, but results for all component functions of $x^d$ were derived in \cite{Can08}.
We summarize these results about the absolute indicator in the following proposition.

\begin{Prop}
	Let $F(x)=x^d$ be a function on $\gf_{2^n}$.
	\begin{enumerate}
		\item If $n$ is odd and $d=2^r+3$ with $r=\frac{n+1}{2}$, then $\Delta_{F}\in \{2^\frac{n+1}{2},2^\frac{n+3}{2}\}$.
		\item If $n$ is odd and $d$ is the $i$-th Kasami exponent, where $3i \equiv \pm 1 \pmod n$, then $\Delta_{F}=2^\frac{n+1}{2}$.
		\item If $n=2m$ and $d=2^{m+1}+3$, then $\Delta_{F}\leq 2^{\frac{3m}{2}+1}$.
		\item If $n=2m$, $m$ odd and $d=2^{m}+2^{\frac{m+1}{2}}+1$, then $\Delta_{F}\leq 2^{\frac{3m}{2}+1}$.
		\item If $n=6r$ and $d=2^{2r}+2^r+1$, then $\Delta_{F}=2^{5r}$.
	\end{enumerate}
\end{Prop}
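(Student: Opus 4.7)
The overall strategy is a citation-and-reduction argument: every one of the five items is already known for the Boolean \emph{trace} function $\tr_{2^n}(x^d)$ in the references cited just before the statement, so the plan is to pull each of these Boolean results into the vectorial setting. The bridge from the single trace component to the full vectorial absolute indicator is Proposition~\ref{prop_monomials}: whenever $\gcd(d,2^n-1)=1$, the autocorrelation spectrum of $F(x)=x^d$ coincides with the autocorrelation spectrum of $\tr_{2^n}(x^d)$, so $\Delta_F = \Delta_{\tr_{2^n}(x^d)}$. Thus, for items (1)--(4) the proof consists of (i) checking that $d$ is coprime to $2^n-1$, so that $x^d$ is a permutation and the Boolean-to-vectorial passage is licit, and (ii) quoting the corresponding Boolean result verbatim.

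Case by case, I would proceed as follows. For (1), with $n$ odd and $d=2^r+3$, $r=(n+1)/2$, standard facts give $\gcd(d,2^n-1)=1$; the dichotomy $\Delta\in\{2^{(n+1)/2},2^{(n+3)/2}\}$ for the Boolean component is then read off from \cite[Theorem~5]{SAC:GonKho03} (or equivalently from \cite[Lemma~2]{IS:SunWu09}) and transferred by Proposition~\ref{prop_monomials}. For (2), the hypothesis $3i\equiv\pm 1\pmod n$ forces $\gcd(i,n)=1$, hence $\gcd(2^{2i}-2^i+1,2^n-1)=1$, and the equality $\Delta=2^{(n+1)/2}$ for the trace of the $i$-th Kasami exponent is exactly \cite[Lemma~3]{IS:SunWu09}. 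For (3) and (4), one again verifies that the exponents are invertible modulo $2^n-1$, then invokes the upper bounds for the trace functions proved in \cite{TIT:Carlet08}, noting that each is of the form $2^{3m/2+1}$. Finally, for (5) the function $x^{2^{2r}+2^r+1}$ on $\gf_{2^{6r}}$ is \emph{not} a permutation, so Proposition~\ref{prop_monomials} does not reduce the problem to a single component; however, \cite{Can08} computes the Walsh/autocorrelation behaviour of \emph{all} components of this monomial, which directly yields $\Delta_F = 2^{5r}$ by taking the maximum over components $v\neq 0$.

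The only real obstacles in the plan are bookkeeping ones: making sure that each cited reference is phrased in terms of the autocorrelation (rather than the dual notion of the second moment of the Walsh transform, which is how \cite{SAC:GonKho03} and \cite{TIT:Carlet08} often state their bounds) and confirming the coprimality conditions in (1)--(4). For (5), a little extra care is needed because the maximum over all components is not immediately visible from the Boolean viewpoint; here I would rely on the component-by-component description in \cite{Can08} and simply extract $\max_{v\ne 0}\Delta_{\tr_{2^n}(vx^d)} = 2^{5r}$, which is the absolute indicator of $F$ by Definition~\ref{AC_Def}. No new calculation is required beyond these translations.
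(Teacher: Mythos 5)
Your proposal matches the paper's treatment exactly: the paper likewise presents this proposition as a summary of the cited Boolean results (\cite[Theorem 5]{SAC:GonKho03}, \cite{TIT:Carlet08}, \cite[Lemmas 2 and 3]{IS:SunWu09}), transferred to the vectorial setting via Proposition~\ref{prop_monomials} when the exponent is coprime to $2^n-1$, and handles case (5) separately through the component-wise analysis of \cite{Can08} since $x^{2^{2r}+2^r+1}$ is not a permutation of $\gf_{2^{6r}}$. No further argument is given in the paper, so your citation-and-reduction plan is the intended proof.
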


 We now provide a different proof of the second case in the previous proposition that additionally relates the autocorrelation table of $K_i$ with the Walsh spectrum of a Gold function.
 
 \begin{Prop}\cite{Dillon}\label{thm:support}
 	Let $n$ be odd, not divisible by $3$ and $3i \equiv \pm 1 \pmod n$.
 	Set $f=\tr_{2^n}(x^d)$ where $d=4^i-2^i+1$ is the $i$-th Kasami exponent.
 	Then
 	\begin{equation*}
 	\supp(W_{f}) = \left\{x ~|~ \tr_{2^n}(x^{2^i+1})=1\right\}.
 	\end{equation*}
 \end{Prop}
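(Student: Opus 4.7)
My plan is to reduce the Walsh-spectrum computation to an analysis of a quadratic Boolean form via a well-chosen change of variables, then invoke the classical theory of symplectic forms over $\gf_2$. Since $3i \equiv \pm 1 \pmod n$ and $3 \nmid n$, it follows that $\gcd(i,n) = 1$; in particular $\gcd(2^i+1, 2^n-1) = 1$ as $n$ is odd, so $y \mapsto y^{2^i+1}$ is a permutation of $\gf_{2^n}$. Moreover, the identity $d(2^i+1) = 2^{3i}+1$ combined with $3i \equiv \pm 1 \pmod n$ gives, after at most one application of Frobenius, $\tr_{2^n}\bigl((y^{2^i+1})^d\bigr) = \tr_{2^n}(y^3)$. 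Substituting $x = y^{2^i+1}$ into the Walsh sum yields
\[
W_f(\omega) = \sum_{y\in\gf_{2^n}} (-1)^{\tr_{2^n}(y^3 + \omega y^{2^i+1})}.
\]

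The crucial observation is that the exponent $Q_\omega(y) := \tr_{2^n}(y^3 + \omega y^{2^i+1})$ is a \emph{quadratic} Boolean function in $y$, being a sum of two Gold-type quadratic forms, even though $x^d$ has high algebraic degree in general. Its associated symplectic bilinear form is $B_\omega(y_1,y_2) = \tr_{2^n}\bigl(y_1 L_\omega(y_2)\bigr)$ with
\[
L_\omega(z) = z^2 + z^{2^{n-1}} + \omega z^{2^i} + \omega^{2^{n-i}} z^{2^{n-i}}.
\]
By the standard theory of quadratic Boolean forms, $|W_f(\omega)|^2 = 2^{n+\dim \ker L_\omega}$ when $Q_\omega$ vanishes identically on $\ker L_\omega$, and $W_f(\omega) = 0$ otherwise.

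Since $F=x^d$ is AB (thanks to $\gcd(i,n)=1$), $W_f \in \{0,\pm 2^{(n+1)/2}\}$; together with the parity constraint that $\dim \ker L_\omega$ is odd (as $n$ is odd and the symplectic rank is even), this forces $\dim \ker L_\omega = 1$ at every $\omega$ where $W_f(\omega)\neq 0$. By Parseval, $|\supp(W_f)| = 2^{n-1}$; and since $y \mapsto y^{2^i+1}$ is a bijection and $\tr_{2^n}$ is balanced, the set $\{\omega : \tr_{2^n}(\omega^{2^i+1})=1\}$ also has cardinality $2^{n-1}$. It therefore suffices to prove the inclusion in one direction between these two equinumerous sets.

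The main obstacle is the final step: explicitly determining the unique nonzero generator $y_\omega$ of $\ker L_\omega$ and evaluating $Q_\omega(y_\omega)$ in closed form. Using the Kasami identity $d(2^i+1) \equiv 2^{\pm 1}+1 \pmod{2^n-1}$ together with repeated Frobenius manipulations to solve the linearized equation $L_\omega(z)=0$, I expect the computation to yield the identity $Q_\omega(y_\omega) = 1 + \tr_{2^n}(\omega^{2^i+1})$ in $\gf_2$ (a sanity check at $\omega=0$ confirms this: $\ker L_0 = \{0,1\}$ and $Q_0(1)=\tr_{2^n}(1) = 1$ for $n$ odd). This implies $W_f(\omega)\neq 0 \iff Q_\omega(y_\omega)=0 \iff \tr_{2^n}(\omega^{2^i+1})=1$, and the desired equality of sets then follows immediately from the cardinality argument.
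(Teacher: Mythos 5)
The paper does not prove this proposition; it is quoted from \cite{Dillon}, so there is no internal proof to compare against. That said, your opening reduction is the standard and correct one: since $3i\equiv\pm1\pmod n$ and $3\nmid n$ force $\gcd(i,n)=1$, the map $y\mapsto y^{2^i+1}$ permutes $\gf_{2^n}$ ($n$ odd), and the identity $d(2^i+1)=2^{3i}+1$ turns the Walsh sum into $\sum_{y}(-1)^{\tr_{2^n}(y^3+\omega y^{2^i+1})}$, an exponential sum of a quadratic form governed by the radical $\ker L_\omega$ and the restriction of $Q_\omega$ to it. The counting argument (Parseval plus bijectivity of $x\mapsto x^{2^i+1}$ showing both sets have cardinality $2^{n-1}$, so one inclusion suffices) is also sound.

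However, there is a genuine gap exactly where the theorem's difficulty lies: you never establish the identity $Q_\omega(y_\omega)=1+\tr_{2^n}(\omega^{2^i+1})$, writing only that you ``expect the computation to yield'' it. Determining the radical of $L_\omega(z)=z^2+z^{2^{n-1}}+\omega z^{2^i}+\omega^{2^{n-i}}z^{2^{n-i}}$ explicitly as a function of $\omega$ and evaluating $Q_\omega$ there is not a routine Frobenius manipulation; it is the core of the Dillon--Dobbertin argument and requires nontrivial linearized-polynomial identities. A single sanity check at $\omega=0$ does not substitute for it. A secondary issue: ``the unique nonzero generator $y_\omega$ of $\ker L_\omega$'' is only well-defined when $\dim\ker L_\omega=1$, which your AB-based argument guarantees only for $\omega\in\supp(W_f)$; for $\omega$ outside the support the radical has odd dimension that could be $3,5,\dots$, so the claimed closed form is not even well-posed without further work. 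Until the vanishing of $Q_\omega$ on the radical is actually tied to $\tr_{2^n}(\omega^{2^i+1})$, what you have is an outline, not a proof.
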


\begin{Prop}\label{prop:kasami}
	Let $n$ be odd, not divisible by $3$ and $3i \equiv \pm 1 \pmod n$.
	Then
	\begin{equation*}
	\AC_{K_i}(u,v) = -\sum_{x \in \gf_{2^n}} (-1)^{\tr_{2^n}(uv^{1/d}x+x^{2^i+1})},
	\end{equation*}
	where $d = 4^i-2^i+1$ is the $i$-th Kasami exponent and $1/d$ denotes the inverse of $d$ in $\Z_{2^n-1}$.
	In particular, $\Delta_{K_i}=2^\frac{n+1}{2}$.
\end{Prop}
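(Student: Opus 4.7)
The plan is to combine the AB duality formula from Proposition~\ref{Prop_AB} with the explicit description of the Walsh support of the Kasami trace from Proposition~\ref{thm:support}, exploiting the fact that $K_i$ is a monomial with $\gcd(d,2^n-1)=1$.

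First, observe that the condition $3i \equiv \pm 1 \pmod n$ forces $\gcd(i,n)=1$, so $K_i$ is an AB function. By Proposition~\ref{Prop_AB} applied with amplitude $2^{(n+1)/2}$, we have
\[\AC_{K_i}(u,v) = -W_{\widetilde{f}_v}(u),\]
where $\widetilde{f}_v$ is the indicator of the Walsh support of $f_v(x)=\tr_{2^n}(vx^d)$. Since $\gcd(d,2^n-1)=1$, writing $v=(v^{1/d})^d$ yields $f_v(x)=f(v^{1/d}x)$ where $f=f_1$, and the change of variable $y=v^{1/d}x$ gives
\[W_{f_v}(b) = W_f(bv^{-1/d}).\]
Consequently $\widetilde{f}_v(b) = \widetilde{f}(bv^{-1/d})$.

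Next, I apply Proposition~\ref{thm:support}, which under the hypotheses on $n$ and $i$ identifies the support of $W_f$ with the set $\{x : \tr_{2^n}(x^{2^i+1})=1\}$. Therefore $\widetilde{f}(x) = \tr_{2^n}(x^{2^i+1})$ as Boolean functions, and hence $\widetilde{f}_v(b) = \tr_{2^n}\bigl((bv^{-1/d})^{2^i+1}\bigr)$. Plugging this into the Walsh transform and substituting $c = bv^{-1/d}$ (a bijection of $\gf_{2^n}$) yields
\[W_{\widetilde{f}_v}(u) = \sum_{c\in\gf_{2^n}}(-1)^{\tr_{2^n}(c^{2^i+1}) + \tr_{2^n}(uv^{1/d}c)},\]
which after combining with the minus sign from Proposition~\ref{Prop_AB} gives precisely the claimed expression for $\AC_{K_i}(u,v)$.

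For the absolute indicator, the inner sum is exactly the Walsh transform of the Gold function $x \mapsto x^{2^i+1}$ evaluated at $uv^{1/d}$. Since $\gcd(i,n)=1$, this Gold function is AB, so its Walsh coefficients lie in $\{0,\pm 2^{(n+1)/2}\}$, giving $\Delta_{K_i}\le 2^{(n+1)/2}$. The lower bound is immediate: the Gold Walsh transform does not vanish identically on $\gf_{2^n}^*$, so some choice of $(u,v)$ realizes the value $2^{(n+1)/2}$, yielding equality. The main technical step — and the only step requiring care — is the bookkeeping of the exponent $1/d$ when transporting the support description from $f$ to $f_v$; everything else is a clean substitution.
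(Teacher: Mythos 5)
Your proof is correct and follows essentially the same route as the paper's: reduce $\widetilde{f}_v$ to $\widetilde{f}_1$ via the monomial structure (equivalently, $W_{K_i}(u,v)=W_{K_i}(uv^{-1/d},1)$), identify $\widetilde{f}_1$ with $\tr_{2^n}(x^{2^i+1})$ using Proposition~\ref{thm:support}, and conclude with Proposition~\ref{Prop_AB} and the fact that the Gold function is AB. The only (welcome) addition is that you explicitly justify that the value $2^{(n+1)/2}$ is attained at some nonzero $(u,v)$, a point the paper leaves implicit.
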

\begin{proof}
	It is well-known that, if $F$ is a power permutation over a finite field, its Walsh spectrum is uniquely defined by the entries $W_{F}(1,b)$.
	Indeed, for $v \neq 0$,
	\begin{align*}
	W_{K_i}(u,v)&=\sum_{x \in \gf_{2^n}} (-1)^{\tr_{2^n} (ux + v x^d )} 
	= \sum_{x \in \gf_{2^n}} (-1)^{\tr_{2^n} (uv^{-1/d} x + x^d )} = W_{K_i}(uv^{-1/d}, 1). 
	\end{align*}
Define a Boolean function 
	$$\widetilde{f}_{v}(x) = \begin{cases}
	1, & \text{ if } W_{K_i}(x,v) \neq 0 \\
	0, & \text{ if } W_{K_i}(x,v)=0. \\
	\end{cases}$$
	By Proposition \ref{thm:support}, the function $\widetilde{f}_{v}$ becomes 
		$$
		\widetilde{f}_{v}(x)  = \tr_{2^n}((v^{-1/d} x)^{2^i+1}).
		$$
	It follows from Proposition \ref{Prop_AB} that, for any $u$ and $v$,

$$
	\AC_{K_i}(u,v) =-W_{\tilde{f}_v}(u) = -\sum_{x \in \gf_{2^n}} (-1)^{\tr_{2^n}(ux+(v^{-1/d} x)^{2^i+1})}=-\sum_{x \in \gf_{2^n}} (-1)^{\tr_{2^n}(uv^{1/d}x+x^{2^i+1})}.$$
	Observe that $\gcd(i,n)=1$, so the Gold function $x^{2^i+1}$ is AB and
	$
	\AC_{K_i} = 2^{\frac{n+1}{2}}\;.
	$
\end{proof}
Note that the cases $3i \equiv 1 \pmod n$ and $3i \equiv -1 \pmod n$ are essentially only one case because the $i$-th and $(n-i)$-th Kasami exponents belong to the same cyclotomic coset.
Indeed, $(4^{(n-i)}-2^{n-i}+1)2^{2i} \equiv 4^{i}-2^i+1 \pmod {2^n-1}.$


%

From the known result in the literature, it appears that $(n, n)$-functions with a low absolute indicator are rare objects, which is also confirmed by experimental results for small integer $n$.  Below we propose an open problem for such functions.
\begin{Prob}
	For an odd integer $n$, are there power functions $F$ over $\gf_{2^n}$ with $\Delta_F=2^{(n+1)/2}$ other than the Kasami APN functions?
\end{Prob}

The Bracken-Leander function \cite{BL2010} is also a cubic permutation with differential uniformity $4$. 
In the following, we determine the autocorrelation spectrum and the absolute indicator of the Bracken-Leander function.

\begin{Th}
	Let $F(x) = x^{q^2+q+1}\in\gf_{q^4}[x]$, where $q=2^k$. Then for any nonzero $u, v$,
	$$\AC_F(u,v) \in \left\{  -q^3, 0, q^3  \right\} $$ and $\Delta_F = q^3$. 
\end{Th}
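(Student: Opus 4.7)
The plan is to combine Propositions~\ref{prop_monomials} and~\ref{prop-cubic}. By the former, $\AC_F(u,v) = \AC_F(1, vu^d)$ with $d = q^2+q+1$, so it suffices to analyze $\AC_F(1,v)$ for $v \in \gf_{q^4}^*$. Since $d$ has binary weight three, $F$ has algebraic degree three, and Proposition~\ref{prop-cubic} gives $\AC_F(1,v)^2 \in \{0, 2^{n+d(1,v)}\}$, where $d(1,v) = \dim_{\gf_2} \ker A_{1,v}$ and $A_{1,v}(w)$ is the $x$-linear coefficient of $D_w D_1 f_v(x)$. An explicit expansion of $D_w D_1 F(x)$ followed by repeated use of trace invariance yields the $\gf_q$-linearized polynomial
\[
A_{1,v}(w) = a^{q^2} w^{q^3} + b\, w^{q^2} + a^{q^3} w^q, \qquad a = v+v^q, \quad b = v+v^{q^2}.
\]
Since $n = 4k$, the target $|\AC_F(1,v)| \in \{0, q^3\}$ reduces to showing that $\dim_{\gf_q} \ker A_{1,v} = 2$ whenever $A_{1,v} \not\equiv 0$.

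We plan to prove this dimension equals $2$ by matching lower and upper bounds on the kernel. For the lower bound, we exhibit two $\gf_q$-linearly independent kernel elements. First, $A_{1,v}(1) = a^{q^2} + b + a^{q^3} = 0$ holds because $a + a^q + a^{q^2} + a^{q^3} = 2(v + v^q + v^{q^2} + v^{q^3}) = 0$ forces $a^{q^2} + a^{q^3} = a + a^q = b$ in characteristic two. Second, a direct Frobenius expansion gives $A_{1,v}(v^q) = a^{q^2} v + b v^{q^3} + a^{q^3} v^{q^2}$, and after substituting $a^{q^2} = v^{q^2}+v^{q^3}$, $a^{q^3} = v^{q^3}+v$, and $b = v+v^{q^2}$, the resulting six monomials cancel in pairs. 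When $v \notin \gf_q$ we have $v^q \notin \gf_q$, so $\{1, v^q\}$ is $\gf_q$-linearly independent and $\dim_{\gf_q} \ker A_{1,v} \geq 2$. For the upper bound, we substitute $y = x^q$, a bijection of $\gf_{q^4}$, and write $A_{1,v}(x) = L(x^q)$ with $L(y) = a^{q^3} y + b y^q + a^{q^2} y^{q^2}$, a polynomial of ordinary degree $q^2$; hence $|\ker L| \leq q^2$. Combining the two bounds yields $\dim_{\gf_q} \ker A_{1,v} = 2$, and Proposition~\ref{prop-cubic} gives $\AC_F(1,v) \in \{0, \pm q^3\}$ for $v \notin \gf_q$.

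The remaining case $v \in \gf_q^*$, where $A_{1,v} \equiv 0$ and Proposition~\ref{prop-cubic} only bounds $|\AC_F|$ by $q^4$, we dispatch by direct evaluation. Writing $g(x) = (x+1)^d + x^d = 1 + x + x^q + x^{q^2} + x^{q+1} + x^{q^2+1} + x^{q^2+q}$ and using $v^q = v$, trace invariance collapses the three linear terms to $\tr(vx)$, while $\tr(vx^{q+1}) = \tr(vx^{q^2+q})$ cancel in characteristic two. The remaining term $\tr(vx^{q^2+1})$ vanishes identically because $vx^{q^2+1} \in \gf_{q^2}$ (since $x^{q^2+1}$ is fixed by Frobenius squared) and the absolute trace is identically zero on $\gf_{q^2}$ in characteristic two. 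Thus $\tr(v g(x)) = \tr(v) + \tr(vx)$ and $\AC_F(1,v) = 0$ by balance of $\tr(vx)$. This proves $\AC_F(u,v) \in \{-q^3, 0, q^3\}$ for all nonzero $u, v$, and $\Delta_F = q^3$ follows because $F : \gf_{q^4} \to \gf_{q^4}$ has $m = n$ and so cannot be bent, forcing at least one autocorrelation value to be nonzero. The main technical obstacle we foresee is the Frobenius bookkeeping required to verify $A_{1,v}(v^q) = 0$ and to execute the cancellations in the $v \in \gf_q$ case, though each individual step is routine.
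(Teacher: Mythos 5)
Your proof is correct and follows essentially the same route as the paper: both square the autocorrelation and reduce the problem to showing that the kernel of the same $q$-linearized polynomial (your $A_{1,v}$, the paper's $L_v$) has size $q^2$ for $v\notin\gf_q$, with the case $v\in\gf_q$ handled by a direct trace computation. The only real difference is in that kernel step, where the paper asserts that the Dickson matrix of $L_v$ has rank $2$, while you establish the same fact more explicitly by exhibiting the independent kernel elements $1$ and $v^q$ and bounding the number of roots by the degree $q^2$ after the substitution $x\mapsto x^q$.
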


\begin{proof}
	For any $v\in\gf_{q^4}^{*}$,
	\begin{eqnarray*}
		\AC_F(1,v) &=& \sum_{x\in\gf_{q^4}}(-1)^{\tr_{q^4} \left( v(F(x)+F(x+1) ) \right) } \\
		&=& \sum_{x\in\gf_{q^4}}(-1)^{\tr_{q^4} \left( v\left( x^{q^2+q}+x^{q^2+1}+x^{q+1}+x^{q^2}+x^{q}+x+1 \right)  \right) }\\
		&=& (-1)^{\tr_{q^4}(v)} \sum_{x\in\gf_{q^4}}(-1)^{\tr_{q^4}\left( vx^{q^2+1}+\left( v^{q^3}+v \right)x^{q+1} +\left( v^{q^3}+v^{q^2}+v  \right) x   \right) }
	\end{eqnarray*}
	Moreover, 
	\begin{eqnarray*}
		\AC_F(1,v)^2 
		&=&  \sum_{x,y\in\gf_{q^4}}(-1)^{\tr_{q^4}\left( vx^{q^2+1}+\left( v^{q^3}+v \right)x^{q+1} +\left( v^{q^3}+v^{q^2}+v  \right) x +vy^{q^2+1}+\left( v^{q^3}+v \right)y^{q+1} +\left( v^{q^3}+v^{q^2}+v  \right) y   \right) } \\
		&= &  \sum_{x,y\in\gf_{q^4}}(-1)^{\tr_{q^4}\left( v(x+y)^{q^2+1}+\left( v^{q^3}+v \right)(x+y)^{q+1} +\left( v^{q^3}+v^{q^2}+v  \right) (x+y) +vy^{q^2+1}+\left( v^{q^3}+v \right)y^{q+1} +\left( v^{q^3}+v^{q^2}+v  \right) y   \right) } \\
		&=& \sum_{x,y\in\gf_{q^4}}(-1)^{\tr_{q^4}\left( v\left( x^{q^2+1}+xy^{q^2}+x^{q^2}y \right) + \left( v^{q^3}+v \right) \left(x^{q+1}+xy^{q}+x^qy  \right)  + \left( v^{q^3}+v^{q^2}+v \right) x \right) } \\
		&=&\sum_{x\in\gf_{q^4}}(-1)^{  \tr_{q^4} \left(  vx^{q^2+1} +\left(v^{q^3}+v \right)x^{q+1}+\left( v^{q^3}+v^{q^2}+v \right) x  \right)   }\sum_{y\in\gf_{q^4}}(-1)^{\tr_{q^4}(L_v(x)y)},
	\end{eqnarray*}	 
	where $L_v(x) = \left( v^{q^3}+v^{q^2}\right)x^{q^3}+\left( v^{q^2}+v \right)x^{q^2}+\left( v^{q^3}+v\right)x^q.  $ Let 
	$\ker\left(L_v \right) := \left\{  x\in\gf_{q^4} | L_v(x) =0 \right\}.$ Then 
	$$\AC_F(1,v)^2  = q^4 \times \sum_{x\in\ker\left(L_v \right)} (-1)^{ \phi_v(x) },  $$ where $\phi_v(x) = \tr_{q^4} \left(  vx^{q^2+1} +\left(v^{q^3}+v \right)x^{q+1}+\left( v^{q^3}+v^{q^2}+v \right) x  \right). $  
	
	(1) When $v\in\gf_q^{*}$, $L_v(x)=0$ and thus $\ker\left(L_v \right) = \gf_{q^4}$. Moreover, $ \phi_v(x) = \tr_{q^4}\left(  vx^{q^2+1} +vx \right) = \tr_{q^4}\left(vx\right). $ Therefore, $$\AC_F(1,v)^2  = q^4\times \sum_{x\in \gf_{q^4}} (-1)^{ \tr_{q^4}(vx) } = 0.$$

	(2) When $v\in\gf_{q^4}\backslash\gf_{q}$, $\phi_v$ is linear on $\ker\left(L_v \right)$, which can be proved by direct computations. Thus $\AC_F(1,v)^2\neq 0$ only when $\phi_v$ is the all-zero mapping on $\ker\left(L_v \right)$. In addition, there must exist some $v$ such that $\AC_F(1,v)\neq0$ since $F$ is not bent. Moreover, the Dickson matrix of $L_v$ is 
	$$D = \begin{pmatrix}
	0 & v^{q^3}+v &  v^{q^2}+v & v^{q^3}+v^{q^2}\\
	v^{q^3}+v & 0 & v^q+v & v^{q^3}+v^q \\
	v^{q^2}+v & v^q+v & 0 & v^{q^2}+v^q \\
	v^{q^3}+v^{q^2} & v^{q^3}+v^q & v^{q^2}+v^q & 0 
	\end{pmatrix}.$$
	It is easy to compute that the rank of $D$ is $2$ and thus $\#\ker\left(L_v\right)=q^2$. Therefore, there exists some $v$ with $$\AC_F(1,v)^2 = q^4\sum_{x\in\ker\left(L_v \right)} (-1)^{ \phi_v(x)} =q^4 \#\ker\left(L_v\right) = q^6. $$
	This completes the proof.
\end{proof}

\subsection{Quadratic functions and their inverses}
\label{sec:quadratic}

In this subsection,  we firstly consider the general quadratic functions and determine the autocorrelation spectra of the Gold functions and of their inverses.

\begin{Th}
	\label{quad_func}
	Let $F(x) = \sum_{0\le i<j\le n-1}a_{ij}x^{2^i+2^j}\in\gf_{2^n}[x]$. Then the autocorrelation table of $F$ takes values from $\{0, \pm 2^{n}\}$
	and $\Delta_F = 2^{n}.  $
\end{Th}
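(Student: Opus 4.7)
The plan is to exploit that any function of algebraic degree at most~$2$ has all its first-order derivatives affine. For $F(x) = \sum_{0 \le i < j \le n-1} a_{ij}\,x^{2^i + 2^j}$ and any $u \in \gf_{2^n}$, a direct expansion of $F(x+u)+F(x)$ gives
$$D_u F(x) \;=\; F(u) + \sum_{0 \le i<j \le n-1} a_{ij}\bigl( x^{2^i} u^{2^j} + x^{2^j} u^{2^i}\bigr),$$
which is $\gf_2$-affine in~$x$. Hence, for any $v \in \gf_{2^n}^{*}$, the Boolean function $x \mapsto v \cdot D_u F(x)$ is affine of the form $\ell_{u,v}(x) + c_{u,v}$ with $\ell_{u,v}$ linear and $c_{u,v}=v\cdot F(u)$. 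Therefore
$$\AC_F(u,v) \;=\; \sum_{x \in \gf_{2^n}} (-1)^{\ell_{u,v}(x) + c_{u,v}} \;\in\; \{0,\, \pm 2^n\},$$
with $|\AC_F(u,v)| = 2^n$ exactly when $\ell_{u,v}$ vanishes identically. This settles the first assertion.

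For the second assertion, I need to exhibit some pair $(u_0,v_0)$ with $u_0\neq 0$ and $v_0\neq 0$ for which $\ell_{u_0,v_0}\equiv 0$. Since vectorial bent $(n,m)$-functions exist only when $m\le n/2$, the $(n,n)$-function~$F$ cannot be vectorial bent, so there must exist at least one $v_0\in \gf_{2^n}^{*}$ such that the component $F_{v_0}$ is not a bent Boolean function. Being quadratic and non-bent, $F_{v_0}$ admits a nontrivial linear structure: the associated symplectic bilinear form $B_{v_0}(x,y) = F_{v_0}(x+y)+F_{v_0}(x)+F_{v_0}(y)+F_{v_0}(0)$ is alternating on $\gf_2^n$ and hence has even rank $2r<n$, so its radical has dimension $n-2r\ge 1$. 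Picking any nonzero $u_0$ in this radical makes $D_{u_0}F_{v_0}$ constant, i.e.\ $\ell_{u_0,v_0}\equiv 0$, so $|\AC_F(u_0,v_0)|=2^n$ and $\Delta_F=2^n$.

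The only nontrivial step is the existence statement in the second paragraph; the expansion of~$D_uF$ in the first paragraph is purely routine. The obstacle is thus an invocation of two classical facts — the non-existence of $(n,n)$ vectorial bent functions and the equivalence, for quadratic Boolean functions, between bentness and non-degeneracy of the associated alternating form — rather than any real calculation. If one prefers to avoid citing vectorial bent non-existence, an alternative is to note that the linear map $u\mapsto \ell_{u,v}$ is self-adjoint (because $B_{v}$ is symmetric over $\gf_2$), so its kernel equals the orthogonal complement of its image in~$\gf_2^n$; for at least one $v_0\neq 0$ this image cannot be all of~$\gf_2^n$, yielding a nonzero~$u_0$ in the kernel.
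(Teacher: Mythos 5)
Your proof is correct and follows essentially the same route as the paper's: both reduce the first claim to the fact that every derivative $D_uF$ is affine (the paper writes out the resulting linearized polynomial $L(u,v)$ explicitly, you phrase it abstractly), and both obtain $\Delta_F=2^n$ from the non-existence of bent $(n,n)$-functions. Your second paragraph merely makes the paper's one-line "since $F$ cannot be bent, $\Delta_F\neq 0$" more explicit via the radical of the associated alternating form, which is a welcome but not substantively different argument.
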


\begin{proof}
	For any  $u,  v\in\gf_{2^n}^{*},$
	\begin{eqnarray*}
		\AC_F(u,v) &=& \sum_{x\in\gf_{2^n}}(-1)^{\tr_{2^n}\left( v(F(x)+F(x+u))\right)  }\\
		&=&\sum_{x\in\gf_{2^n}}(-1)^{\tr_{2^n}\left( v\left( \sum_{0\le i<j\le n-1}  a_{ij}\left(  u^{2^j}x^{2^i} +  u^{2^i} x^{2^j}+u^{2^i+2^j}  \right) \right)  \right)}\\
		&=&(-1)^{\tr_{2^n}\left( v \left(  \sum_{0\le i<j\le n-1}  a_{ij} u^{2^i+2^j} \right)  \right)} \sum_{x\in\gf_{2^n}} (-1)^{\tr_{2^n}  \left(   L(u,v) x\right)}, \\
	\end{eqnarray*}
where $L(u,v) =  \sum_{0\le i<j\le n-1}\left( a_{ij}^{2^{-i}}u^{2^{j-i}}v^{2^{-i}} + a_{ij}^{2^{-j}} u^{2^{i-j}} v^{2^{-j}} \right).$	When $ L(u,v) = 0,$ $\AC_F(u,v) = \pm 2^n$; otherwise, $\AC_F(u,v) = 0.$ Thus $\AC_F(u,v) \in \left\{ -2^{n}, 0, 2^{n} \right\}. $ Moreover, since $F$ cannot be bent, we obtain $\Delta_F\neq0$ and then $\Delta_F = 2^{n}$.
\end{proof}

\begin{Cor}
	Let $F(x) = x^{2^i+1}\in\gf_{2^n}[x]$. Assume $d=\gcd(i,n)$ and $n'=n/d$. Then
		\begin{equation*}
	\Lambda_F =	\left\{
	\begin{array}{lr}
	\{ 0, 2^{n} \},  &~ \text{if}~ n^{'}~ \text{is even}, \\
	\{ -2^{n}, 0 \}, &~ \text{if}~ n^{'} ~\text{is odd and}~ d=1,\\
	\{ -2^{n}, 0, 2^{n} \}, &~ \text{otherwise.}
	\end{array}
	\right.
	\end{equation*}   
\end{Cor}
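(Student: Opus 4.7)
The plan is to specialize Theorem~\ref{quad_func} to the Gold case and carry out a short case analysis on when the sign is $+$ or $-$. By that theorem, for $F(x)=x^{2^i+1}$ (viewed as the quadratic function with only $a_{0,i}=1$), the autocorrelation is given by
\[
\AC_F(u,v) = (-1)^{\tr_{2^n}(v u^{2^i+1})}\sum_{x\in\gf_{2^n}}(-1)^{\tr_{2^n}(L(u,v)x)},
\]
where the linearized form $L(u,v)$ specializes to $L(u,v)=u^{2^i}v + (uv)^{2^{-i}}$. Raising this to the $2^i$-th power and factoring $uv$, one sees that $L(u,v)=0$ is equivalent to $(u^{2^i+1}v)^{2^i-1}=1$, i.e., $u^{2^i+1}v$ lies in the group of $(2^i-1)$-th roots of unity in $\gf_{2^n}^{*}$. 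That group is precisely $\gf_{2^d}^{*}$ with $d=\gcd(i,n)$. Hence the nonzero values of $\AC_F$ are exactly $(-1)^{\tr_{2^n}(\alpha)}\cdot 2^n$ with $\alpha = u^{2^i+1}v \in \gf_{2^d}^{*}$.

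Next I would observe that every element of $\gf_{2^d}^{*}$ is attained: given $\alpha\in\gf_{2^d}^{*}$, simply take $u=1$ and $v=\alpha$ to get $u^{2^i+1}v=\alpha$ with $u,v\in\gf_{2^n}^{*}$. Therefore the set of nonzero autocorrelation values is exactly $\{(-1)^{\tr_{2^n}(\alpha)}\cdot 2^n : \alpha\in\gf_{2^d}^{*}\}$. The value $0$ also belongs to $\Lambda_F$ as soon as $d<n$ (which is assumed implicitly by taking a proper Gold exponent), because then $\gf_{2^d}^{*}\subsetneq \gf_{2^n}^{*}$ and one can choose $(u,v)$ with $u^{2^i+1}v\notin\gf_{2^d}^{*}$.

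The final step is the standard trace tower identity: for $\alpha\in\gf_{2^d}\subseteq\gf_{2^n}$,
\[
\tr_{2^n}(\alpha) \;=\; \tr_{2^d}\bigl(\tr^{2^n}_{2^d}(\alpha)\bigr) \;=\; n'\cdot\tr_{2^d}(\alpha) \pmod 2,
\]
where $n'=n/d$. Three cases follow immediately. If $n'$ is even, the sign is always $+$, giving $\Lambda_F=\{0,2^n\}$. If $n'$ is odd and $d=1$, then $\gf_{2^d}^{*}=\{1\}$ and $\tr_{2^n}(1)=n=n'$ is odd, so only the sign $-$ appears and $\Lambda_F=\{-2^n,0\}$. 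If $n'$ is odd and $d\ge 2$, then $\tr_{2^d}$ is balanced on $\gf_{2^d}$ and takes both values $0$ and $1$ on $\gf_{2^d}^{*}$, so both signs arise and $\Lambda_F=\{-2^n,0,2^n\}$.

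There is no serious obstacle; the one step requiring care is the algebraic reduction showing $L(u,v)=0 \iff u^{2^i+1}v\in\gf_{2^d}^{*}$, after which the rest is routine case bookkeeping using the trace tower.
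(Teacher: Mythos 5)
Your proof is correct and follows essentially the same route as the paper: specialize Theorem~\ref{quad_func} to the Gold exponent, identify the vanishing of the linearized form with membership of a suitable element in $\gf_{2^d}^{*}$, and split into cases via the trace tower $\tr_{2^n}(\alpha)=n'\,\tr_{2^d}(\alpha)$ for $\alpha\in\gf_{2^d}$. The only cosmetic difference is that you carry a general $u$ through the computation and reduce everything to $\alpha=u^{2^i+1}v$, whereas the paper first restricts to $u=1$ (as justified by Proposition~\ref{prop_monomials} for monomials) and works with $L(v)=v^{2^{-i}}+v$ directly.
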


\begin{proof}
	From the proof of Theorem \ref{quad_func}, it is clear that 
	$$\AC_{F}(1,v) = (-1)^{\tr_{2^n}(v)} \sum_{x\in\gf_{2^n}} (-1)^{\tr_{2^n}(L(v)x)}, $$
	where $L(v) = v^{2^{-i}}+v.$ Thus $\ker( L) = \gf_{2^{\gcd(i,n)}} = \gf_{2^d}$. Furthermore, for any $v\in\gf_{2^d}$, $\tr_{2^n}(v)=n^{'}\tr_{2^d}(v)$. Therefore, 
	\begin{equation*}
 \AC_F(1,v) =	\left\{
	\begin{array}{lr}
	0,  &~  \text{if}~ v \in\gf_{2}^n\backslash\gf_{2}^d, \\
	2^n\times (-1)^{n^{'}\tr_{2^d}(v)}, &~ \text{if}~ v\in\gf_{2}^d.
	\end{array}
	\right.
	\end{equation*}
   It follows that
	\begin{equation*}
\Lambda_F =	\left\{
\begin{array}{lr}
\{ 0, 2^{n} \}, &~ \text{if}~ n^{'}~ \text{is even}, \\
\{ -2^{n}, 0 \}, &~ \text{if}~ n^{'} ~\text{is odd and}~ d=1,\\
\{ -2^{n}, 0, 2^{n} \}, &~ \text{otherwise.}
\end{array}
\right.
\end{equation*}   
\end{proof}

As previously observed, the autocorrelation spectrum and the absolute indicator are not invariant under compositional inversion. Then, in the following, we consider the absolute indicator of the inverse of a quadratic permutation, which is not obvious at all. Indeed, the absolute indicator depends on the considered function, as we will see next.

For example, for $n=9$, the inverses of the two APN Gold permutations $x^3$ and $x^5$, namely $x^{341}$ and $x^{409}$, do not have the same absolute indicator: the absolute indicator of $x^{341}$ is $56$ while the absolute indicator of $x^{409}$ is $72$. 

Nevertheless, the specificity of quadratic APN permutations for $n$ odd is that they are crooked \cite{BF98}, which means that the image set of every derivative $D_uF, u \neq 0$, is the complement of a hyperplane $ \langle \pi(u) \rangle^{\perp}.$ Moreover, it is known (see e.g. \cite[Proof of Lemma 5]{CC03}) that all these hyperplanes are distinct, which implies that $\pi$ is a permutation of $\gf_{2}^n$ when we add to the definition that $\pi(0)=0$. Then, the following proposition shows that, for any quadratic APN permutation $F$, the autocorrelation of $F^{-1}$ corresponds to the Walsh transform of~$\pi$.

\begin{Prop}
	\label{pi}
	Let $n$ be an odd integer and $F$ be a quadratic APN permutation over $\gf_2^n$. Let further $\pi$ be the permutation of $\gf_2^n$ defined by 
	$$ \mathsf{Im} (D_uF) = \gf_{2}^n \backslash \langle \pi(u) \rangle^{\perp}, ~~\text{when}~~ u \neq 0, $$
	and $\pi(0)=0$. Then for any nonzero $u,v$ in $\gf_2^n$, we have 
	$$\AC_{F^{-1}}(u,v) = -W_{\pi}(v,u). $$
	It follows that $$\Delta_{F^{-1}}\ge 2^{\frac{n+1}{2}}$$
	with equality if and only if $\pi$ is an AB permutation.
\end{Prop}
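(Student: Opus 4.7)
The plan is to exploit the fact that APN-ness is preserved under compositional inversion, so that Proposition~\ref{Prop_APN} can be applied to $F^{-1}$. This reduces the problem to identifying the indicator of $\mathsf{Im}(D_u F^{-1})$ in terms of $\pi$.

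First, I would recall the elementary symmetry of the DDT under compositional inversion: for any permutation $F$ and any $a,b$,
\[\mathtt{DDT}_{F^{-1}}(a,b) = \mathtt{DDT}_F(b,a),\]
which follows from the substitution $x = F^{-1}(y)$ in the defining equation $F^{-1}(y+a)+F^{-1}(y) = b$. In particular, $F^{-1}$ is APN whenever $F$ is. Hence Proposition~\ref{Prop_APN} applied to $F^{-1}$ gives $\AC_{F^{-1}}(u,v) = -W_{\gamma_u}(v)$, where $\gamma_u$ is the indicator of $\mathsf{Im}(D_u F^{-1})$.

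The next step is to describe $\gamma_u$ using the permutation~$\pi$ provided by the crookedness of~$F$. By the DDT symmetry just recalled, for $u, v \neq 0$,
\[v \in \mathsf{Im}(D_u F^{-1}) \iff u \in \mathsf{Im}(D_v F) \iff \pi(v)\cdot u = 1,\]
where the second equivalence uses the defining property $\mathsf{Im}(D_v F) = \gf_2^n\setminus \langle\pi(v)\rangle^{\perp}$. The boundary case $v=0$ is consistent since $\pi(0)=0$ and $0 \notin \mathsf{Im}(D_u F^{-1})$ for $u\neq 0$. Therefore $\gamma_u(v) = u\cdot \pi(v)$ for all $v\in\gf_2^n$, and substituting this into the Walsh transform yields
\[\AC_{F^{-1}}(u,v) = -\sum_{y\in\gf_2^n}(-1)^{u\cdot \pi(y) + v\cdot y} = -W_\pi(v,u).\]

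For the bound, note that $\pi$ is an $(n,n)$-function; consequently
\[\Delta_{F^{-1}} = \max_{u,v\in\gf_2^n\setminus\{0\}} |W_\pi(v,u)| = \mathtt{L}(\pi).\]
Since $n$ is odd, the Sidelnikov-Chabaud-Vaudenay bound recalled in Section~\ref{Preliminaries} gives $\mathtt{L}(\pi)\geq 2^{(n+1)/2}$ with equality if and only if $\pi$ is AB. The main point requiring care is the identification $\gamma_u = u\cdot \pi$, which hinges on the DDT transposition for inverses together with the definition of $\pi$ via crookedness; once that is in hand the rest is a direct computation.
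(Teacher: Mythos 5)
Your proof is correct, and it reaches the same two pillars as the paper's argument --- the transposition identity $\mathtt{DDT}_{F^{-1}}(a,b)=\mathtt{DDT}_F(b,a)$ and the crooked description of the DDT of $F$ via $\pi$ --- but it packages them differently. The paper works directly from Proposition~\ref{DLCT_DDT}: it writes $\AC_{F^{-1}}(u,v)=\sum_{\omega}(-1)^{v\cdot\omega}\mathtt{DDT}_{F}(\omega,u)$, substitutes $\mathtt{DDT}_F(\omega,u)=1-(-1)^{\pi(\omega)\cdot u}$ (valid for all $(\omega,u)\neq(0,0)$ thanks to $\pi(0)=0$), and lets the constant term vanish because $v\neq 0$. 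You instead first observe that $F^{-1}$ is APN and invoke Proposition~\ref{Prop_APN}, which forces you to identify the indicator $\gamma_u$ of $\mathsf{Im}(D_uF^{-1})$ as the linear function $y\mapsto u\cdot\pi(y)$; your chain of equivalences and the check at $y=0$ do this correctly, and the resulting Walsh transform is the same. The paper's route is marginally leaner in that it never needs to argue that $F^{-1}$ is APN, only that the DDT of $F$ is two-valued in the pattern prescribed by $\pi$; your route has the small virtue of making explicit that $\gamma_u$ is itself a linear Boolean function, which is the structural reason the autocorrelation of $F^{-1}$ collapses to a Walsh coefficient of $\pi$. Your closing step, equating $\Delta_{F^{-1}}$ with $\mathtt{L}(\pi)$ (the input-mask-zero coefficients vanish since $\pi$ is a permutation) and applying the Sidelnikov--Chabaud--Vaudenay bound for odd $n$, matches the paper's conclusion.
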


\begin{proof}
	Let $u,v$ be two nonzero elements of $\gf_2^n$. Then, from Eq. (\ref{DLCT_1_DDT}), we deduce
	\begin{eqnarray*}
	\AC_{F^{-1}}(u,v) &=& \sum_{\omega\in\gf_{2}^m}(-1)^{v\cdot\omega} \mathtt{DDT}_{F^{-1}}(u,\omega) \\
	&=& \sum_{\omega\in\gf_{2}^m}(-1)^{v\cdot\omega} \mathtt{DDT}_{F}(\omega,u).
	\end{eqnarray*}
By the definition of $\pi$, we have that, for any nonzero $a$, 
$$\mathtt{DDT}_F(a,b) = 
\left\{
\begin{array}{ll}
2, & ~~\text{if}~~ b\cdot \pi(a) =1, \\
0, &~~\text{if}~~ b\cdot \pi(a) =0.
\end{array}
\right.$$
It then follows that
$$\mathtt{DDT}_F(a,b) = 1-(-1)^{\pi(a)\cdot b},$$
where this equality holds for all $(a,b)\neq (0,0)$ by using that $\pi(0)=0$. Therefore, we have, for any nonzero $u$ and $v$,
$$ \AC_{F^{-1}}(u,v) = - \sum_{\omega\in\gf_{2}^m}(-1)^{v\cdot\omega} \left(1-(-1)^{\pi(\omega)\cdot u}  \right) = -W_{\pi}(v,u). $$

As a consequence, $\Delta_{F^{-1}}$ is equal to the linearity of $\pi$, which is at least $2^{\frac{n+1}{2}}$ with equality for AB functions.
\end{proof}

It is worth noticing that the previous proposition is valid, not only for quadratic APN
permutations, but for all \textit{crooked} permutations, which are a particular case of AB functions. However, the existence of crooked permutations of degree strictly higher than 2 is an open question.

As a corollary of the previous proposition, we get some more precise information on
the autocorrelation spectrum of the quadratic power permutations corresponding to the inverses of the Gold
functions. Recall that $x^{2^i+1}$ and $x^{2^{n-i}+1}$ are affine equivalent since the two exponents belong to the same cyclotomic coset modulo $(2^n-1)$.  This implies that their inverses share the same autocorrelation spectrum.

\begin{Cor}
	Let $n>5$ be an odd integer and $0<i<n$ with $\gcd(i,n)=1$. Let $F$ be the APN power permutation over $\gf_{2^n}$ defined by $F(x)=x^{2^i+1}$. Then, for any nonzero $u$ and  $v$ in $\gf_{2^n}$, we have 
	$$  \AC_{F^{-1}}(u,v) = - W_{\pi}(v,u), ~~\text{where}~~ \pi(x) = x^{2^n-2^i-2}.  $$
	Most notably, the absolute indicator of $F^{-1}$ is strictly higher than $2^{\frac{n+1}{2}}$.
\end{Cor}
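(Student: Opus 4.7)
My plan is to apply Proposition~\ref{pi} to the Gold APN permutation $F(x)=x^{2^i+1}$, which reduces the statement to an explicit determination of the auxiliary permutation~$\pi$. Once $\pi$ is identified as the monomial $x^{2^n-2^i-2}$, the formula for $\AC_{F^{-1}}$ follows directly from Proposition~\ref{pi}; the strict lower bound on $\Delta_{F^{-1}}$ then reduces, via the same proposition, to showing that this $\pi$ is not AB.

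For the identification of $\pi$, I would start from the expansion $D_uF(x)=L_u(x)+u^{2^i+1}$, where $L_u(x)=ux^{2^i}+u^{2^i}x$ is $\gf_2$-linear in $x$. The substitution $x=uy$ yields $L_u(uy)=u^{2^i+1}(y^{2^i}+y)$, so $\mathsf{Im}(L_u)=u^{2^i+1}\cdot\mathsf{Im}(T)$ with $T(y)=y^{2^i}+y$. Since $\gcd(i,n)=1$, the map $T$ has kernel $\gf_2$, and the identity $\tr_{2^n}(T(y))=0$ combined with a dimension count forces $\mathsf{Im}(T)=\ker(\tr_{2^n})$. Therefore $\mathsf{Im}(L_u)=\langle u^{-(2^i+1)}\rangle^\perp$, and translating by $u^{2^i+1}$ together with $\tr_{2^n}(1)=1$ (which uses that $n$ is odd) gives
\[\mathsf{Im}(D_uF)=\bigl\{w\in\gf_{2^n}:\tr_{2^n}(wu^{-(2^i+1)})=1\bigr\}=\gf_{2^n}\setminus\langle u^{-(2^i+1)}\rangle^\perp.\]
By the definition of $\pi$ in Proposition~\ref{pi}, this forces $\pi(u)=u^{-(2^i+1)}=u^{2^n-2^i-2}$, and the first identity of the corollary follows at once.

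For the strict inequality, Proposition~\ref{pi} tells us that $\Delta_{F^{-1}}=2^{(n+1)/2}$ would force $\pi$ to be AB, so it suffices to rule this out. The argument I propose is a degree count: the algebraic degree of the monomial $\pi(x)=x^{2^n-2^i-2}$ equals the binary Hamming weight of its exponent, and since $2^n-2=\sum_{j=1}^{n-1}2^j$ has bits $1,\dots,n-1$ set, subtracting $2^i$ with $1\le i\le n-1$ just flips one bit, leaving weight $n-2$. The classical Carlet-Charpin-Zinoviev bound asserts that every AB function on $\gf_{2^n}$ has algebraic degree at most $(n+1)/2$. For odd $n>5$ one has $n-2>(n+1)/2$, so $\pi$ cannot be AB and Proposition~\ref{pi} yields $\Delta_{F^{-1}}>2^{(n+1)/2}$.

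The only step that requires genuine care is the identification $\mathsf{Im}(T)=\ker(\tr_{2^n})$ together with the appearance of $\tr_{2^n}(1)=1$: this is precisely what makes $\mathsf{Im}(D_uF)$ the complement of a hyperplane rather than the hyperplane itself, and thus what guarantees the very existence of $\pi$ in this setting. This is where the oddness of $n$ and the coprimality $\gcd(i,n)=1$ intervene in an essential way. Beyond that, the rest of the argument is routine.
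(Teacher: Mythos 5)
Your proposal is correct and follows essentially the same route as the paper: both identify $\pi(u)=u^{-(2^i+1)}=u^{2^n-2^i-2}$ by reducing the solvability of $(x+u)^{2^i+1}+x^{2^i+1}=v$ to the condition $\tr_{2^n}\bigl(vu^{-(2^i+1)}\bigr)=1$ (using $\tr_{2^n}(1)=1$ for odd $n$), and both then rule out $\pi$ being AB via the Carlet--Charpin--Zinoviev degree bound, since $\pi$ has algebraic degree $n-2>\frac{n+1}{2}$ when $n>5$. Your phrasing in terms of $\mathsf{Im}(D_uF)$ as a translate of $\mathsf{Im}(y\mapsto y^{2^i}+y)$ is just a slightly more geometric packaging of the paper's solution-counting argument.
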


\begin{proof}
The result comes from the form of the function $\pi$ which defines the DDT of $x^{2^i+1}$.
Indeed, for any nonzero $u\in\gf_{2^n}$ the number $\mathtt{DDT}_F(u,v)$ of solutions of
$$(x+u)^{2^i+1}+x^{2^i+1}=v$$
is equal to the number of solutions of 
$$x^{2^i}+x=1+vu^{-(2^i+1)},$$
which is nonzero if and only if $\tr_{2^n}\left(vu^{-(2^i+1)} \right) = 1.$ It follows that 
$$\pi(x) = x^{2^n-2^i-2}.$$
Then the autocorrelation of $F^{-1}$ then follows from Proposition \ref{pi}. Moreover, this function $\pi$ cannot be AB since AB functions have algebraic degree at most $\frac{n+1}{2}$ \cite[Theorem 1]{CCZ98}, while $\pi$ has degree $(n-2)$. It follows that $\pi$ cannot be AB when $n>5$. Therefore, 	the absolute indeed of $F^{-1}$ is strictly higher than $2^{\frac{n+1}{2}}$.
\end{proof}

In the specific case $n=5$, it can easily be checked that the inverses of all Gold APN
permutations $F(x)=x^{2^i+1}$ have absolute indicator~$8$.

\section{Conclusion}
\label{conclusion}
{This paper intensively investigates the differential-linear connectivity table (DLCT) of vectorial Boolean functions.
The main contributions of this paper are four-fold.
Firstly, we reveal the connection between DLCT and the autocorrelation table of vectorial Boolean functions 
and we characterize these two notions in terms of the Walsh transform of the function and of its differential distribution table.
Secondly, we provide bounds on the absolute indicator of \((n,m)\)-functions when \(m \geq n\) and we exhibit the divisibility property of the autocorrelation of any vectorial Boolean functions. Moreover, we
investigate the invariance of the autocorrelation table under affine, EA and CCZ equivalence and exhaust the autocorrelation spectra of optimal $4$-bit S-boxes. 
Thirdly, we analyze some properties of the autocorrelation of cryptographically desirable functions, including  APN, plateaued and AB functions and express the autocorrelation of APN and AB functions with the Walsh transform of certain Boolean functions. Finally, we investigate the autocorrelation spectra of some special polynomials, including monomials with low differential uniformity, cubic monomials, quadratic functions and inverses of quadratic permutations. 

This paper only covers a small portion of interesting problems on this subject and many problems deserve further research. For instance,  
the generic lower bound on the absolute indicator of vectorial Boolean functions derived in this paper is lower than what experimental results suggest and thus might be further improved. A natural follow-up topic would be the investigation and construction of optimal, or near-optimal, vectorial Boolean functions with respect to the bounds.
}

{\bfseries Note: } The current paper is a merged version of  \cite{CKW2019} and \cite{LLLQ2019}.

\bibliographystyle{plain}

\bibliography{ref}

\begin{thebibliography}{10}

\bibitem{EC:BDKW19}
Achiya {Bar-On}, Orr Dunkelman, Nathan Keller, and Ariel Weizman.
\newblock {DLCT}: {A} new tool for differential-linear cryptanalysis.
\newblock pages 313--342, 2019.

\bibitem{BF98}
T.D. Bending and D.~Fon-Der-Flaass.
\newblock Crooked functions, bent functions, and distance regular graphs.
\newblock {\em The Electronic Journal of Combinatorics}, 5, 1998.

\bibitem{BCC2006}
Thierry Berger, Anne Canteaut, Pascale Charpin, and Yann Laigle-Chapuy.
\newblock On almost perfect nonlinear functions over $\mathbb{F}_2^n$.
\newblock {\em {IEEE} Transactions on Information Theory}, 52(9):4160--4170,
  2006.

\bibitem{CHES:BBKMW13}
Beg{\"u}l Bilgin, Andrey Bogdanov, Miroslav Kne{\v{z}}evi{\'c}, Florian Mendel,
  and Qingju Wang.
\newblock Fides: Lightweight authenticated cipher with side-channel resistance
  for constrained hardware.
\newblock pages 142--158, 2013.

\bibitem{EC:BloNyb13}
C{\'e}line Blondeau and Kaisa Nyberg.
\newblock New links between differential and linear cryptanalysis.
\newblock pages 388--404, 2013.

\bibitem{ToSC:BouCan18}
Christina Boura and Anne Canteaut.
\newblock On the boomerang uniformity of cryptographic sboxes.
\newblock 2018(3):290--310, 2018.

\bibitem{BL2010}
Carl Bracken and Gregor Leander.
\newblock A highly nonlinear differentially 4 uniform power mapping that
  permutes fields of even degree.
\newblock {\em Finite Fields and Their Applications}, 16(4):231--242, July
  2010.

\bibitem{Bud2014}
Lilya Budaghyan.
\newblock {\em Construction and Analysis of Cryptographic Functions}.
\newblock New York, NY, USA: Springer-Verlag, 2014.

\bibitem{BCP2006}
Lilya Budaghyan, Claude Carlet, and Alex Pott.
\newblock New classes of almost bent and almost perfect nonlinear polynomials.
\newblock {\em {IEEE} Transactions on Information Theory}, 52(3):1141--1152,
  March 2006.

\bibitem{CC03}
Anne Canteaut and Pascale Charpin.
\newblock Decomposing bent functions.
\newblock {\em {IEEE} Transactions on Information Theory}, 49(8):2004--2019,
  Aug. 2003.

\bibitem{Can08}
Anne Canteaut, Pascale Charpin, and Gohar~M. Kyureghyan.
\newblock A new class of monomial bent functions.
\newblock {\em Finite Fields and Their Applications}, 14(1):221--241, Jan.
  2008.

\bibitem{CKW2019}
Anne Canteaut, Lukas K{\"o}lsch, and Friedrich Wiemer.
\newblock Observations on the {DLCT} and absolute indicators.
\newblock {\em Cryptology ePrint Archive,
  https://eprint.iacr.org/2019/848.pdf}, 2019.

\bibitem{TIT:Carlet08}
Claude Carlet.
\newblock Recursive lower bounds on the nonlinearity profile of boolean
  functions and their applications.
\newblock {\em {IEEE} Transactions on Information Theory}, 54(3):1262--1272,
  March 2008.

\bibitem{Carlet2010}
Claude Carlet.
\newblock Boolean functions for cryptography and error-correcting codes.
\newblock In Yves Crama and Peter~L. Hammer, editors, {\em Boolean Models and
  Methods in Mathematics, Computer Science, and Engineering}, pages 257--397.
  Cambridge University Press, 2010.

\bibitem{CCZ98}
Claude Carlet, Pascale Charpin, and Victor Zinoviev.
\newblock Codes, bent functions and permutations suitable for {DES}-like
  cryptosystems.
\newblock {\em Designs, Codes and Cryptography}, 15(2):125--156, 1998.

\bibitem{EC:ChaVau94}
Florent Chabaud and Serge Vaudenay.
\newblock Links between differential and linear cryptanalysis.
\newblock pages 356--365, 1995.

\bibitem{Cha07}
Pascale Charpin, Tor Helleseth, and Victor Zinoviev.
\newblock Propagation characteristics of $x\mapsto x^{-1}$ and {Kloosterman}
  sums.
\newblock {\em Finite Fields and Their Applications}, 13(2):366--381, April
  2007.

\bibitem{EC:CHPSS18}
Carlos Cid, Tao Huang, Thomas Peyrin, Yu~Sasaki, and Ling Song.
\newblock Boomerang connectivity table: {A} new cryptanalysis tool.
\newblock pages 683--714, 2018.

\bibitem{Dillon}
John~F. Dillon.
\newblock Multiplicative difference sets via additive characters.
\newblock {\em Designs, Codes and Cryptography}, 17(1-3):225--235, 1999.

\bibitem{FSE:Dobbertin94}
Hans Dobbertin.
\newblock Construction of {Bent} functions and balanced {Boolean} functions
  with high nonlinearity.
\newblock pages 61--74, 1995.

\bibitem{DM:GanKesMai06}
Sugata Gangopadhyay, Pradipkumar~H. Keskar, and Subhamoy Maitra.
\newblock {Patterson-Wiedemann} construction revisited.
\newblock {\em Discrete Mathematics}, 306(14):1540--1556, 2006.

\bibitem{SAC:GonKho03}
Guang Gong and Khoongming Khoo.
\newblock Additive autocorrelation of resilient {Boolean} functions.
\newblock pages 275--290, 2004.

\bibitem{DAM:Kavut16}
Sel{\c{c}}uk Kavut.
\newblock Correction to the paper: {Patterson-Wiedemann} construction
  revisited.
\newblock {\em Discrete Applied Mathematics}, 202:185--187, 2016.

\bibitem{DCC:KavMaiTan19}
Sel{\c{c}}uk Kavut, Subhamoy Maitra, and Deng Tang.
\newblock Construction and search of balanced boolean functions on even number
  of variables towards excellent autocorrelation profile.
\newblock {\em Des. Codes Cryptogrography}, 87(2--3):261--276, 2019.

\bibitem{TIT:KavMaiYuc07}
Sel{\c{c}}uk Kavut, Subhamoy Maitra, and Melek~D. Y{\"{u}}cel.
\newblock Search for boolean functions with excellent profiles in the rotation
  symmetric class.
\newblock {\em {IEEE} Trans. Information Theory}, 53(5):1743--1751, 2007.

\bibitem{C:LanHel94}
Susan~K. Langford and Martin~E. Hellman.
\newblock Differential-linear cryptanalysis.
\newblock pages 17--25, 1994.

\bibitem{LP2007}
Gregor Leander and Axel Poschmann.
\newblock On the classification of 4 bit {S}-boxes.
\newblock In {\em Arithmetic of Finite Fields}, pages 159--176. Springer Berlin
  Heidelberg, 2007.

\bibitem{LLLQ2019}
Kangquan Li, Chunlei Li, Chao Li, and Longjiang Qu.
\newblock On the differential-linear connectivity table of vectorial boolean
  functions.
\newblock {\em CoRR., http://arxiv.org/abs/1907.05986}, 2019.

\bibitem{LQSL2019}
Kangquan Li, Longjiang Qu, Bing Sun, and Chao Li.
\newblock New results about the boomerang uniformity of permutation
  polynomials.
\newblock {\em {IEEE} Transactions on Information Theory}, 2019.

\bibitem{TIT:MaiSar02}
Subhamoy Maitra and Palash Sarkar.
\newblock Modifications of {Patterson-Wiedemann} functions for cryptographic
  applications.
\newblock {\em {IEEE} Trans. Information Theory}, 48(1):278--284, 2002.

\bibitem{FSE:Massey93}
James~L. Massey.
\newblock {SAFER} {K}-64: {A} byte-oriented block-ciphering algorithm.
\newblock pages 1--17, 1994.

\bibitem{McEliece1972}
Robert~J. McEliece.
\newblock Weight congruences for $p$-ary cyclic codes.
\newblock {\em Discrete Mathematics}, 3(1-3):177--192, 1972.

\bibitem{Sihem2016}
Sihem Mesnager.
\newblock {\em Bent Functions: Fundamentals and Results}.
\newblock Springer International Publishing, 2016.

\bibitem{MTX2019}
Sihem Mesnager, Chunming Tang, and Maosheng Xiong.
\newblock On the boomerang uniformity of (quadratic) permutations over
  $\mathbb{F}_{2^n}$.
\newblock {\em CoRR}, 2019.

\bibitem{EC:Nyberg93}
Kaisa Nyberg.
\newblock Differentially uniform mappings for cryptography.
\newblock pages 55--64, 1994.

\bibitem{FSE:Nyberg94}
Kaisa Nyberg.
\newblock {S}-boxes and round functions with controllable linearity and
  differential uniformity.
\newblock pages 111--130, 1995.

\bibitem{Rothaus1976}
Oscar~S. Rothaus.
\newblock On {\textquotedblleft}bent{\textquotedblright} functions.
\newblock {\em Journal of Combinatorial Theory, Series A}, 20(3):300--305, May
  1976.

\bibitem{FSE:SYYTIYTT01}
Takeshi Shimoyama, Hitoshi Yanami, Kazuhiro Yokoyama, Masahiko Takenaka,
  Kouichi Itoh, Jun Yajima, Naoya Torii, and Hidema Tanaka.
\newblock The block cipher {SC2000}.
\newblock pages 312--327, 2002.

\bibitem{ToSC:SonQinHu19}
Ling Song, Xianrui Qin, and Lei Hu.
\newblock Boomerang connectivity table revisited.
\newblock 2019(1):118--141, 2019.

\bibitem{IS:SunWu09}
Guanghong Sun and Chuankun Wu.
\newblock The lower bound on the second-order nonlinearity of a class of
  boolean functions with high nonlinearity.
\newblock {\em Applicable Algebra in Engineering, Communication and Computing},
  22(1):37--45, Dec. 2009.

\bibitem{TIT:TanMai18}
Deng Tang and Subhamoy Maitra.
\newblock Construction of $n$-variable ($n \equiv 2 \bmod{4}$) balanced boolean
  functions with maximum absolute value in autocorrelation spectra $< 2^{n/2}$.
\newblock {\em {IEEE} Trans. Information Theory}, 64(1):393--402, 2018.

\bibitem{Tokareva2015}
Natalia Tokareva.
\newblock {\em Bent Functions: Results and Applications to Cryptography}.
\newblock Academic Press, 2015.

\bibitem{ZZ1995}
Xian-Mo Zhang and Yuliang Zheng.
\newblock {GAC} {\textemdash} the criterion for global avalanche
  characteristics of cryptographic functions.
\newblock In {\em J.{UCS} The Journal of Universal Computer Science}, pages
  320--337. Springer Berlin Heidelberg, 1996.

\bibitem{ZZI2000}
Xian-Mo Zhang, Yuliang Zheng, and Hideki Imai.
\newblock Relating differential distribution tables to other properties of of
  substitution boxes.
\newblock {\em Designs, Codes and Cryptography}, 19(1):45--63, 2000.

\end{thebibliography}

\end{document}